\newenvironment{prop}[1][Proposition]{\par\noindent\textbf{#1.} \itshape}{\par}
\theoremstyle{plain}
\newtheorem*{proof}{Proof}
\theoremstyle{plain}
\newtheorem{definition}{Definition}
\newcommand*\linenomathpatch[1]{%
  \cspreto{#1}{\linenomath}%
  \cspreto{#1*}{\linenomath}%
  \csappto{end#1}{\endlinenomath}%
  \csappto{end#1*}{\endlinenomath}%
}
\def\Figref#1{Figure~\ref{#1}}
\def\Secref#1{Section~\ref{#1}}
\def\eqref#1{(\ref{#1})}
\def\Eqref#1{(\ref{#1})}
\def\Algref#1{Algorithm~\ref{#1}}
\def\Appref#1{Appendix~\ref{#1}}
\def\1{\bm{1}}
\def\vb{{\bm{b}}}
\def\vc{{\bm{c}}}
\def\vf{{\bm{f}}}
\def\vg{{\bm{g}}}
\def\vp{{\bm{p}}}
\def\vu{{\bm{u}}}
\def\vv{{\bm{v}}}
\def\vw{{\bm{w}}}
\def\vx{{\bm{x}}}
\def\vz{{\bm{z}}}
\def\mA{{\bm{A}}}
\def\mB{{\bm{B}}}
\def\mD{{\bm{D}}}
\def\mF{{\bm{F}}}
\def\mG{{\bm{G}}}
\def\mI{{\bm{I}}}
\def\mM{{\bm{M}}}
\def\mP{{\bm{P}}}
\def\mV{{\bm{V}}}
\def\mX{{\bm{X}}}
\DeclareMathAlphabet{\mathsfit}{\encodingdefault}{\sfdefault}{m}{sl}
\SetMathAlphabet{\mathsfit}{bold}{\encodingdefault}{\sfdefault}{bx}{n}
\def\gB{{\mathcal{B}}}
\def\gC{{\mathcal{C}}}
\def\gN{{\mathcal{N}}}
\def\gO{{\mathcal{O}}}
\def\emA{{A}}
\DeclareMathOperator*{\argmin}{arg\,min}
\newcommand{\eg}{{\it e.g.}}
\newcommand{\ie}{{\it i.e.}}
\newcommand{\etal}{{\it et al.}}
\newcommand{\tensor}[1]{\bm{#1}}
\newsavebox{\dotbox}
\newcolumntype{L}[1]{>{\raggedright\let\newline\\arraybackslash\hspace{0pt}}m{#1}}
\newcolumntype{C}[1]{>{\centering\let\newline\\arraybackslash\hspace{0pt}}m{#1}}
\newcolumntype{R}[1]{>{\raggedleft\let\newline\\arraybackslash\hspace{0pt}}m{#1}}
\definecolor{darkred}{rgb}{0.55, 0.0, 0.0}
\definecolor{skyblue}{rgb}{0.53, 0.81, 0.92}
\newcommand{\revision}[1]{\textcolor{black}{#1}}
\newcommand{\ours}{UMLS-MPM}
\def\argmin{\mathop{\rm argmin}}
\begin{document}
\title{Unstructured Moving Least Squares Material Point Methods: A Stable Kernel Approach With Continuous Gradient Reconstruction on General Unstructured Tessellations}

\author[1]{Yadi Cao}
\author[2,1]{Yidong Zhao}
\author[3,1]{Minchen Li}
\author[4]{Yin Yang}
\author[2]{Jinhyun Choo}
\author[1]{Demetri Terzopoulos}
\author[1]{Chenfanfu Jiang}

\affil[1]{University of California, Los Angeles, Los Angeles}
\affil[2]{Korea Advanced Institute of Science \& Technology}
\affil[3]{Carnegie Mellon University}
\affil[4]{University of Utah}

\date{}

%
%
\maketitle
\begin{abstract}
The Material Point Method (MPM) is a hybrid Eulerian Lagrangian simulation technique for solid mechanics with significant deformation.
Structured background grids are commonly employed in the standard MPM, but they may give rise to several accuracy problems in handling complex geometries.
When using (2D) unstructured triangular or (3D) tetrahedral background elements, however, significant challenges arise (\eg, cell-crossing error).
Substantial numerical errors develop due to the inherent $\mathcal{C}^0$ continuity property of the interpolation function, which causes discontinuous gradients across element boundaries.
Prior efforts in constructing $\mathcal{C}^1$ continuous interpolation functions have either not been adapted for unstructured grids or have only been applied to 2D triangular meshes. 
In this study, an Unstructured Moving Least Squares MPM (UMLS-MPM) is introduced to accommodate 2D and 3D simplex tessellation. 
The central idea is to incorporate a diminishing function into the sample weights of the MLS kernel, ensuring an analytically continuous velocity gradient estimation. 
Numerical analyses confirm the method's capability in mitigating cell crossing inaccuracies and realizing expected convergence.
\end{abstract}
\section{Introduction}
\label{sec:intro}

The Material Point Method (MPM)~\cite{sulsky1995application} was introduced to solid mechanics as an extension of both the Fluid-Implicit Particle (FLIP) method~\cite{brackbill1988flip} and the Particle-in-Cell (PIC) method~\cite{harlow1962particle}. The MPM is a hybrid Eulerian-Lagrangian method, often referred to as a particle-grid method that retains and monitors all physical attributes on a collection of particles. A background grid serves in solving the governing equations. Both Eulerian and Lagrangian descriptions are incorporated in the MPM to overcome the numerical challenges stemming from nonlinear convective terms inherent in a strictly Eulerian approach, while avoiding significant grid distortions typically found in purely Lagrangian methods. The efficacy of the method has been demonstrated in problems concerning extreme deformation of solid materials, such as biological soft tissues~\cite{ionescu2006simulation,guilkey2006computational}, explosive materials~\cite{guilkey2007eulerian,ma2009simulation}, sand~\cite{homel2014simulation,klar2016drucker,tampubolon2017multi}, and snow~\cite{stomakhin2013material,gaume2018dynamic,gaume2019investigating}.

Based on the specific Lagrangian formulations, the
MPM is categorized into total Lagrangian~\cite{de2020total,de2021modelling,de2022total,de2022mesh} and updated Lagrangian~\cite{pretti2023conservation} variants, in which equations are formulated in different reference configurations. In the total Lagrangian MPM, numerical dissipation errors or artificial fractures are not observed; 
however, challenges arise due to mesh distortions as the connectivity is preserved in a manner similar to the Finite Element Method (FEM). Conversely, the updated Lagrangian MPM has been found to exhibit greater robustness, particularly in dealing with demanding scenarios such as impacts and shocks~\cite{zhang2006explicit,huang2010contact,homel2014simulation,cao2022efficient}, failures and cracks in both single-phase and multi-phase materials~\cite{chen2002evaluation,zhang2009material,tampubolon2017multi}, and contact mechanics~\cite{lian2011coupling,chen2015improved,homel2017field,homel2018fracture,cheon2018efficient,de2021modelling,guilkey2021hybrid,nakamura2021particle}.

Despite its numerous successes, the updated Lagrangian MPM mainly adopts a uniformly-structured background grid that aligns with the axes of the global Cartesian coordinate system, using (2D) quadrilaterals or (3D) hexahedra for spatial discretization. 
When boundaries involve complex geometry, however, the aforementioned approach may introduce significant challenges in conformally discretizing the space. 
Remarkably, many engineering problems, such as those in mechanical and geotechnical engineering~\cite{fern2019material}, involve complex boundary geometry.
Hence, some researchers~\cite{wikeckowski2004material,beuth2011solution,jassim2013two,wang2021efficient} have proposed using unstructured (2D) triangles or (3D) tetrahedra for discretization, which provides substantial flexibility in the presence of geometrically complex boundaries. 

Unfortunately, most of the existing approaches using unstructured triangular or tetrahedral elements adopt a piecewise linear ($\gC^0$) basis function~\cite{wikeckowski2004material,beuth2011solution,jassim2013two,wang2021efficient} whose gradient is discontinuous along element boundaries. 
In this case, when particles move from one element to another (\ie, crossing element boundaries), a significant error arises---the so-called cell-crossing error~\cite{bardenhagen2004generalized}. 
Because the function gradient becomes discontinuous along element boundaries, the cell-crossing error leads to severe stress oscillations, causing significant numerical errors. 

Several approaches have been proposed for circumventing the cell-crossing error, including the generalized interpolation material point (GIMP) method \cite{bardenhagen2004generalized,charlton2017igimp}, the dual domain MPM (DDMPM)~\cite{zhang2011material}, the use of high-order basis functions such as B-splines~\cite{steffen2008analysis,gan2018enhancement}, and approaches based on moving least squares (MLS) basis functions~\cite{hu2018moving,tran2019moving}. 
Unfortunately, they are either limited to structured quadrilaterals/hexahedra or are only applicable to 2D cases using triangles~\cite{de2021extension}.
This leaves the cell-crossing error as an unsolved challenge when using unstructured tessellations in both the 2D and 3D MPM.

The objective of this study is to address the aforementioned cell-crossing challenge for general unstructured meshes in both 2D and 3D. 
The proposed approach is built upon a new MLS reconstruction process that is suitable for general unstructured discretization.
By incorporating a diminishing function into the sample weights of the MLS kernel, an analytically continuous function gradient is achieved, which efficiently eliminates the cell-crossing error.
A new MLS kernel function is derived that can be straightforwardly implemented into an existing MPM framework.

The remainder of this paper is structured as follows: \Secref{sec:governing_eqs} introduces the general governing equations of the MPM and the details of a typical explicit MPM process. The Moving Least Squares (MLS) approximation and the MLS-MPM method are discussed in \Secref{sec:interpws:mls}. A seemingly straightforward yet inherently flawed extension of the MLS-MPM to unstructured meshes, along with the associated cell-crossing challenge, is presented in \Secref{sec:interpws:mls:unstructured}. \Secref{sec:interpws:mls:unstructured:remedy} develops a solution to this challenge, accompanied by an in-depth analysis and kernel reconstruction for representative unstructured meshes. Numerical results affirming the efficacy of the proposed method are reported in \Secref{sec:experiments}. The paper concludes in \Secref{sec:conclusion} with reflections and recommendations for future work.

\section{Methodology}
\label{sec:method}

\subsection{Governing Equations}
\label{sec:governing_eqs}

Following standard continuum mechanics \cite{bonet2008nonlinear}, consider the mapping $\vx = \bm{\phi}(\mX, t)$, which maps points from the (reference) material configuration, represented by $\mX$, to their corresponding locations in the (current) spatial configuration, represented by $\vx$. In this framework, velocity is defined in two different but equivalent manners. On the one hand, $\mV(\mX,t) = \frac{\partial \vx}{\partial t}(\mX, t)$ defines the Lagrangian velocity in the material configuration. On the other hand, the Eulerian velocity in the spatial configuration, is denoted by $\vv(\vx,t) = \mV(\bm{\phi}^{-1}(\vx,t),t)$. Furthermore, the deformation experienced by the material points is quantified using the deformation gradient, given by $\mF(\mX,t) = \frac{\partial \vx}{\partial \mX}(\mX, t)$. The determinant of this gradient, represented by $J$, is also crucial as it provides insights into volumetric changes associated with the deformation process.

Given these definitions, the conservation equations for mass and momentum (neglecting external forces) are \cite{zhang2016material,bonet2008nonlinear}
\begin{equation}
    \begin{aligned}
        \rho J                 & = \rho_0,                   \\
        \rho \frac{D \bm{v}}{D t} & = \nabla \cdot \bm{\sigma},
    \end{aligned}
\end{equation}
where $\rho$ represents the density,  $D/Dt$ is the material derivative, and
\begin{equation}
    \bm{\sigma} = \frac{1}{J} \bm{P} \bm{F}^T.
\end{equation}
is the Cauchy stress tensor, which is related to the first Piola-Kirchhoff stress $\bm{P}=\frac{\partial \Psi}{\partial \bm{F}}$, where $\Psi$ denotes the strain energy density.
The evolution of the deformation gradient is given by
\begin{equation}
    \label{eq:mpm:deformation_gradient}
    \dot{\mF} = (\nabla \bm{v}) \mF.
\end{equation}

Consider a domain represented by $\Omega$. Boundaries on which the displacement is known, represented as $\partial \Omega_u$, are governed by the Dirichlet boundary condition
\begin{equation}
    x_k(\vx,t) = \bar{x}_k(\vx,t), \quad \forall \vx \in \partial \Omega_u, 
\end{equation}
where $\bar{x}_k$ denotes the predetermined displacement for component $k$. Boundaries on which the tractions (forces per unit area) are predefined, represented as $\partial \Omega_{\tau}$, adhere to the Neumann boundary condition
\begin{equation}
    \sigma_{kl}(\vx,t) n_l = \bar{\tau}_k(\vx,t), \quad \forall \vx \in \partial \Omega_{\tau},
\end{equation}
where $\bar{\tau}_k$ is the prescribed traction for component $k$, and $\sigma_{kl}(\vx,t) n_l$ represents the traction inferred from the stress tensor $\sigma_{kl}$ acting in the direction of the outward unit normal vector $n_l$. For ease of reference and notational clarity in our framework, the subscripts $k$ and $l$ refer to components $k$ and $l$ of any given vector or tensor.

To solve the conservation equations for mass and momentum within the MPM framework, one often turns to the weak form. Specifically, a continuous test function $\phi$, which vanishes on $\partial \Omega_u$, is employed. Then, both sides of the equation are multiplied by $\phi$ and integrated over the domain $\Omega$:
\begin{equation}
    \label{eq:mpm:weak_form}
    \int_{\Omega} \phi \rho \ddot{x}_k  d\Omega = \int_{\partial \Omega_{\tau}} \phi \tau_{k} dA - \int_{\Omega} \frac{\partial \phi}{\partial \vx_l} \sigma_{kl}  d\Omega.
\end{equation}
At this juncture, integration by parts and the Gauss integration theorem are utilized, nullifying the contributions on $\partial \Omega_u$ due to the vanishing of the test function on this boundary subset.

\revision{
For clarity, in the remainder of this paper, the terms "grids" or "grid nodes" will exclusively refer to regular background grid nodes. In contrast, "mesh nodes" will denote nodes in general, unstructured meshes.
For simplicity, we do not change the common abbreviations such as Particle-To-Grid (P2G) and Grid-To-Particle (G2P).
}

In the standard implementation of the MPM, physical quantities such as mass and velocity are retained at material points and then projected onto background grid nodes for further computation. \Eqref{eq:mpm:weak_form} is discretized on these nodes by the Finite Element Method (FEM) and then solved using either implicit or explicit time integration schemes. This article focuses on the explicit symplectic Euler time integration method. While the extension to implicit methods is possible and straightforward, it would be orthogonal to the contribution of the article.

\subsection{Explicit MPM Pipeline}
\label{sec:pipeline}

The explicit MPM pipeline in each time step has four main stages: (1) the transfer of material point quantities to the background nodes, known as Particle-To-Grid (P2G), (2) the computation of the system's evolution on these background nodes, (3) the back-transfer of the evolved quantities to the material points, known as Grid-To-Particle (G2P), and (4) the execution of necessary post-processings, such as elastoplasticity return mapping and material hardening. \Algref{alg:explicit:mpm_pipeline} presents an overview of the MPM pipeline, and the main stages are elaborated below.

\begin{algorithm}[t]
    \caption{Explicit MPM}
    \begin{algorithmic}[1]
        \State Determine material point-node connectivity, calculate kernel functions $w_{p, i}$
        \State \textbf{P2G:}
        \Statex \hspace{1cm}  Nodal mass: $m_i = \sum_p \rho_p V_p w_{p, i}$
        \Statex \hspace{1cm}  Nodal momentum: $\vp_i = \sum_p \vv_p \rho_p V_p w_{p, i}$
        \Statex \hspace{1cm}  Nodal velocity: $\vv_i = \vp_i / m_i$
        \State \hspace{1cm}  Internal force: $\vf_i^{\text{int}} = -\sum_p V_p \bm{\sigma}_p \nabla w_{p, i}$
        \State \hspace{1cm}  Gravity: $\vf_i^{\text{ext}} = \sum_p w_{p, i} m_p \mathbf{g}_p$
        \State \hspace{1cm}  Nodal force: $\vf_i = \vf_i^{\text{ext}} + \vf_i^{\text{int}}$
        \State \textbf{Deformation of background nodes:}
        \Statex \hspace{1cm} Updated nodal accelerations: $\ddot{\vx}_i = \vf_i / m_i$
        \Statex \hspace{1cm} Update nodal velocities: $\tilde{\vv}_i = \vv_i + \Delta t \ddot{\vx}_i$
        \Statex \hspace{1cm} Enforce Dirichlet conditions: $\ddot{\vx}_i = 0 \text{ and } \vf_i = 0$
        \State \textbf{G2P:}
        \Statex \hspace{1cm} Update point velocities: $\vv_p^{\Delta t} = \vv_p + \Delta t \sum_i w_{p, i} \ddot{\vx}_i $
        \Statex \hspace{1cm} Update point positions: $\vx_p^{\Delta t} = \sum_i w_{p, i} \tilde{\vx}_i$
        \State Update deformation gradient: $\mF_p^{\Delta t} = \left(\mI + \sum_i (\tilde{\vx}_i-\vx_i)(\nabla w_{p, i})^T\right) \mF_p$
        \State Update point volume: $V^{\Delta t}_p = \text{det}(\mF^{\Delta t}_p) V^0_p$
        \State Update point stresses: $\bm{\sigma}_p = C(\mF_p)$
        \State \textbf{Enforce plasticity, reset background deformation, advance to next timestep}
    \end{algorithmic}
    \label{alg:explicit:mpm_pipeline}
\end{algorithm}

\paragraph{Stage 1: P2G}
In the MPM, material points are the Lagrangian particles that track the location of the continuum along with physical attributes such as mass, position, and velocity.
\revision{
To evolve the dynamics on the background grid or mesh nodes, an interpolation function—also known as a transfer kernel or simply a kernel—needs to be determined to relate the information from particles to their nearby active nodes. Generally, for a particle located at $\vx_p$ and all surrounding nodes at $\vx_1, \dots, \vx_N$, the stacked kernel values associating the two sides are:
}

\revision{
\begin{equation}
    \label{eq:kernel:define}
    \vw_p
    = [w_{p, 1}, \dots, w_{p, N}]^T
    = \vw(\vz ; \vx_p, \vx_1, \dots, \vx_N) \bigg|_{\vz = \vx_p}.
\end{equation}
}

\revision{
Specifically, we include $\vx_1, \dots, \vx_N$ in this definition because constraints, such as the partition of unity, can only be determined while considering all active neighbors. Nonetheless, the neighbors are implicitly detected by $\vx_p$; for conciseness, we omit this implicit condition in the remaining part of this paper. The kernel supports transferring information between the nodes and any location $\vz$ near $\vx_p$; in most MPM works, the kernel is evaluated at the lagged $\vx_p$ before the completion of P2G, i.e., we set $\vz = \vx_p$ and keep the kernel unchanged for both P2G and G2P.
}

\revision{
Following this, the stacked gradients of the kernels are obtained with respect to the spatial variable $\vz$, then evaluated at $\vx_p$:
}

\revision{
\begin{equation}
    \label{eq:kernel:gradient:define}
    \mG_p
    = [\vg_{p, 1}, \dots, \vg_{p, N}]^T
    = \left[\nabla_{\vz} \vw(\vz ; \vx_p) \right] \bigg|_{\vz = \vx_p}.
\end{equation}
}

In the explicit MPM framework, the lumped mass at each background node is defined as $m_i = \sum_p \rho_p V_p w_{p, i}$, where $\rho_p$ represents the density and $V_p$ the volume of each nearby particle. This definition facilitates the calculation of the background node momentum, expressed as
\begin{equation}
    \label{eq:explicit:mpm:momentum}
    m_i\ddot{\vx}_i = \vf^{\text{int}}_i + \vf^{\text{ext}}_i,
\end{equation}
where $\ddot{\vx}_i$ is the acceleration of node $i$, and $\vf^{\text{int}}_i$ and $\vf^{\text{ext}}_i$ represent the internal forces and the external forces acting on the it, respectively:
\begin{align}
    \vf^{\text{int}}_i & = -\sum_p V_p \bm{\sigma}_p \nabla w_{p, i}, \\
    \vf^{\text{ext}}_i & = \sum_p m_p w_{p, i}\vb_p + \sum_p m_p w_{p, i}\vg_p.
\end{align}
The stress tensor $\bm{\sigma}$ is determined by the deformation gradient $\mF$ through some constitutive relation, indicating how material deformation influences internal forces:
\begin{equation}
    \bm{\sigma} = C(\mF).
\end{equation}

\paragraph{Stage 2: Evolution on the Background Nodes}

Using the accelerations obtained from \Eqref{eq:explicit:mpm:momentum}, we integrate the velocities and positions of the background nodes using a symplectic Euler time integrator employed throughout this work:
\begin{align}
    \tilde{\vv}_i & = \vv_i + \Delta t \ddot{\vx}_i  \quad \text{(Velocity Update)}, \\
    \tilde{\vx}_i & = \vx_i + \Delta t \tilde{\vv}_i \quad \text{(Position Update)},
\end{align}
where the time step size $\Delta t$ is chosen based on the CFL condition~\cite{ferziger2019computational}.

\paragraph{Stage 3: G2P}

The FLIP scheme~\cite{brackbill1988flip} is utilized for all experiments discussed in \Secref{sec:experiments}. In FLIP, the particle positions and velocities are updated as follows:
\begin{align}
    x_p^{\Delta t} & = \sum_i w_{p, i}\tilde{\vx}_i,                \\
    v_p^{\Delta t} & = v_p + \Delta t \sum_i w_{p, i} \ddot{\vx}_i.
\end{align}
Subsequently, the evolution of the deformation gradient $\mF$ in \Eqref{eq:mpm:deformation_gradient} is conducted as follows:
\begin{equation}
    \mF_p^{\Delta t} = \left(\mI + \sum_i (\tilde{\vx}_i-\vx_i)(\nabla w_{p, i})^T\right) \mF_p.
\end{equation}
Given the initial deformation gradient $\mF^0 = \mI$ and initial volume $V^0_p$, particle volumes are updated as:
\begin{equation}
    V^{\Delta t}_p = \text{det}(\mF^{\Delta t}_p) V^0_p.
\end{equation}

\paragraph{Stage 4: Post-Processing and Resetting the Background Nodes}

This stage encompasses all post-processing tasks such as plasticity return mapping and material hardening \cite{simo2006computational}. In the updated Lagrangian MPM, the grid is reset to a non-deformed state at the end of each timestep. This is achieved by keeping the grid or mesh constant while zeroing all information such as velocity and acceleration.

\subsection{Transfer Kernel}
\label{sec:interpws}

In the MPM, the transfer kernel is crucial for relaying particle information to adjacent background nodes. Techniques such as the B-spline MPM~\cite{steffen2008analysis} and GIMP~\cite{bardenhagen2004generalized} use a specific compact support function to smoothly influence nearby grid nodes, whereas methods like Moving Least Squares MPM (MLS-MPM)~\cite{hu2018moving} determine the kernel implicitly, based on the proximity of nodes. However, both strategies follow a similar workflow, which involves for every particle: (1) identifying the set of nearby nodes, and (2) calculating the transfer kernel and gradient for every particle-node pair.

This section first introduces the general MLS reconstruction process and the application of MLS-MPM with a comprehensive linear polynomial basis. It is followed by a discussion on a naive extension of MLS-MPM to unstructured meshes, highlighting the steps of identifying nearby nodes and computing the transfer weights. We then delve into the desirable properties of the kernel, emphasizing why the naive extension fails to yield continuous gradient reconstructions when particles cross cell boundaries. Finally, we propose a solution addressing the issue of discontinuous gradient reconstructions and introduce \ours.


\subsubsection{Introduction to General MLS and MLS-MPM}
\label{sec:interpws:mls}
\revision{
Given the kernel definition in \Eqref{eq:kernel:define}, the Moving Least Squares (MLS) method aims to use a polynomial-based kernel to reconstruct $\hat{u}$ for some function $u$ at any location $\vz$ near a given particle location $\vx_p$. It is defined as follows:
\begin{equation}
    \label{eq:mls:general}
    \hat{u}(\vz;\vx_p) = \vp^T(\vz - \vx_p) \vc(\vx_p),
\end{equation}
where $\vp(\vz - \vx_p) = [p_0(\vz - \vx_p), p_1(\vz - \vx_p), \ldots, p_l(\vz - \vx_p)]^T$ represents the polynomial basis, $\vc(\vx_p) = [c_0(\vx_p), c_1(\vx_p), \ldots, c_l(\vx_p)]^T$ are the corresponding coefficients, and $l$ indicates the total order of the basis. The coefficients $\vc(\vx_p)$ are determined by minimizing the sum of weighted square errors between the sampled function values $u_i$ and the reconstructed values $\hat{u}_i$ at nearby node positions $\vx_i$:
\begin{equation}
    \label{eq:mls:general:minimize}
    \vc(\vx_p) = \argmin \sum_{i \in \gB_{\vx_p}} d(\vx_i - \vx_p) \, ||u_i - \vp^T(\vx_i - \vx_p) \vc(\vx_p)||^2,
\end{equation}
where $d$ is a weighting function that takes proximity as input, and $\gB_{\vx_p}$ is the set of sample points in the local region around $\vx_p$ where the weighting function is non-zero.
}

\revision{
This minimization leads to the following solution for $\vc(\vx_p)$:
\begin{equation}
    \label{eq:mls:general:sol}
    \vc(\vx_p) = \mM^{-1}(\vx_p) \mB(\vx_p) \vu,
\end{equation}
where
\begin{equation}
    \label{eq:mls:general:M}
    \begin{aligned}
        \mM(\vx_p) &= \sum_{i \in \gB_{\vx_p}} d(\vx_i - \vx_p) \vp(\vx_i - \vx_p) \vp^T(\vx_i - \vx_p) \\
        &= \mP(\vx_p) \mD(\vx_p) \mP(\vx_p)^T,
    \end{aligned}
\end{equation}
and
\begin{equation}
    \label{eq:mls:general:b}
    \mB(\vx_p) = \mP(\vx_p) \mD(\vx_p).
\end{equation}
}

\revision{
Here we use the stacked notations: $\vu = [u_1, \ldots, u_N]^T$ is the stacked sample values, $\mP(\vx_p) = [\vp(\vx_1 - \vx_p), \ldots, \vp(\vx_N - \vx_p)]$ is the stacked basis, and $\mD(\vx_p)$ is the diagonal sample weighting matrix with $\mD_{i,i}(\vx_p) = d(\vx_i - \vx_p)$.
}

\revision{
Substituting \Eqref{eq:mls:general:sol} into \Eqref{eq:mls:general}, we obtain the reconstruction:
\begin{equation}
    \label{eq:mls:general:recon}
    \begin{aligned}
        \hat{u}(\vz;\vx_p) &= \vp^T(\vz - \vx_p) \mM^{-1}(\vx_p) \mB(\vx_p) \vu \\
        &= \vw(\vz;\vx_p)^T \vu,
    \end{aligned}
\end{equation} 
where the last derivation is obtained by defining the kernel for MLS, and note that $\mM^{-1}(\vx_p)$ is symmetric:
\begin{equation}
    \label{eq:mls:general:kernel:def}
    \vw(\vz;\vx_p) = \mB^T(\vx_p) \mM^{-1}(\vx_p) \vp(\vz - \vx_p).
\end{equation}
}

\revision{
Similar to \Eqref{eq:kernel:define} and \Eqref{eq:kernel:gradient:define}, in the context of MPM, we again set the spatial variable $\vz$ to be the particle positions $\vx_p$ before completing P2G and obtain the kernel value:
\begin{equation}
    \label{eq:mls:mpm:kernel}
    \vw_p
    = \vw(\vz;\vx_p)\bigg|_{\vz = \vx_p}
    = \mB^T(\vx_p) \mM^{-1}(\vx_p) \vp(0),
\end{equation}
as well as the kernel gradient:
\begin{equation}
    \label{eq:mls:mpm:kernel:gradient}
    \mG_p
    = \left[\nabla_{\vz} \vw(\vz ; \vx_p) \right] \bigg|_{\vz = \vx_p}
    = \mB^T(\vx_p) \mM^{-1}(\vx_p) (\nabla_{\vz} \vp)(0).
\end{equation}
}

\paragraph{The Linear Polynomial Basis Case}
\revision{
A special case involves using a complete linear polynomial basis, as in MLS-MPM~\cite{hu2018moving}, where $\vp(\vz - \vx_p) = [1, (\vz - \vx_p)^T]^T$. Setting $\vz = \vx_p$ in this basis, we have:
\begin{equation}
    \label{eq:mls:mpm:linear:basis:zero}
    \begin{aligned}
        \vp(0) &= 
        \begin{bmatrix}
            1 \\
            \bm{0}_{\text{dim}}
        \end{bmatrix}, \\
        (\nabla_{\vz} \vp)(0) &= 
        \begin{bmatrix}
            \bm{0}^T_{\text{dim}} \\
            \mI_{\text{dim}, \text{dim}}
        \end{bmatrix}, \\
        [\vp(0), (\nabla_{\vz} \vp)(0)] &= \mI_{\text{dim}+1, \text{dim}+1};
    \end{aligned}
\end{equation}
}

\revision{
Substituting \Eqref{eq:mls:mpm:linear:basis:zero} into \Eqref{eq:mls:mpm:kernel} and \Eqref{eq:mls:mpm:kernel:gradient} and stacking $\vw_p$ and $\mG_p$ in a column, we obtain a compact formula for both the kernel and the gradient:
\begin{equation}
    \label{eq:mls:mpm:linear:kernel:gradient}
    \begin{aligned}
        [\vw_p, \mG_p] &= \mB^T(\vx_p) \mM^{-1}(\vx_p) [\vp(0), (\nabla_{\vz} \vp)(0)] \\
        &= \mB^T(\vx_p) \mM^{-1}(\vx_p) \mI_{\text{dim}+1, \text{dim}+1} \\
        &= \mB^T(\vx_p) \mM^{-1}(\vx_p).
    \end{aligned}
\end{equation}
}

\revision{
Applying \Eqref{eq:mls:mpm:linear:kernel:gradient} to the stacked function sample values at nodes $\vu$, we obtain a compact formula for both the reconstructed function value $\hat{u}_p$ and the gradient $\nabla_{\vz} \hat{u}_p$ of $u$:
\begin{equation}
    \label{eq:mls:linear:recon}
    \begin{aligned}
        \begin{bmatrix}
            \hat{u}_p        \\
            \nabla_{\vz} \hat{u}_p \\
        \end{bmatrix} &=
        [\vw_p, \mG_p]^T \vu = 
        \mM^{-1}(\vx_p) \mB(\vx_p) \vu.
    \end{aligned}
\end{equation}
}

We adopt the linear basis throughout this work.

\begin{figure}[tb]
    \centering
    \includegraphics[width=0.9\textwidth]{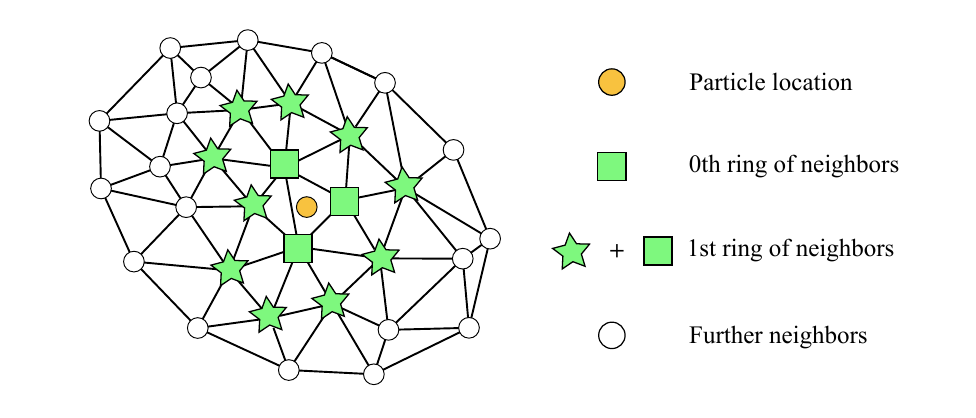}
    \caption{Schematic plot of the zeroth and first ring of neighbors.}
    \label{fig:neighbors}
\end{figure}

\subsubsection{Extending MLS-MPM Onto Unstructured Meshes}
\label{sec:interpws:mls:unstructured}

We select MLS-MPM as our foundation because of its inherent versatility, allowing it to be applied to adjacent nodes without reliance on specific topological or positional constraints. Our implementation and experiments are based on triangular and tetrahedral cells. Nonetheless, it is worth noting that our method can easily be extended to any tessellation by designing a smooth and locally diminishing function $\eta_v$ compatible with the tessellation, such as the one in \Eqref{eq:invD:diminish} for simplex cells.

\paragraph{Identifying Nearby Nodes Around a Particle}
To determine the nearby nodes for a given particle $p$, we first locate the cell that encompasses $p$ and refer to its nodes as $\gN_p^0$, representing the 0-ring neighbors of $p$. Then, we define $\gN_p^1$ as the 1-ring neighbors, which comprise all nodes connected to $\gN_p^0$. Note that $\gN_p^0 \subset \gN_p^1$. Similarly, we can define $\gN_p^2, \dots$ in an analogous manner, as illustrated in Figure~\ref{fig:neighbors}.

\revision{To quickly search for the cell that contains $p$, we pre-store the adjacency relationship between the spatial hashing grid and the mesh cells. When given a $\vx_p$, the spatial hash grid is queried, and we then only check the cells adjacent to this hash grid. The detailed pipeline can be found in \Appref{appdx:pipeline:hash:search}.}

\begin{figure}[tb]
    \centering
    \includegraphics[width=0.9\textwidth]{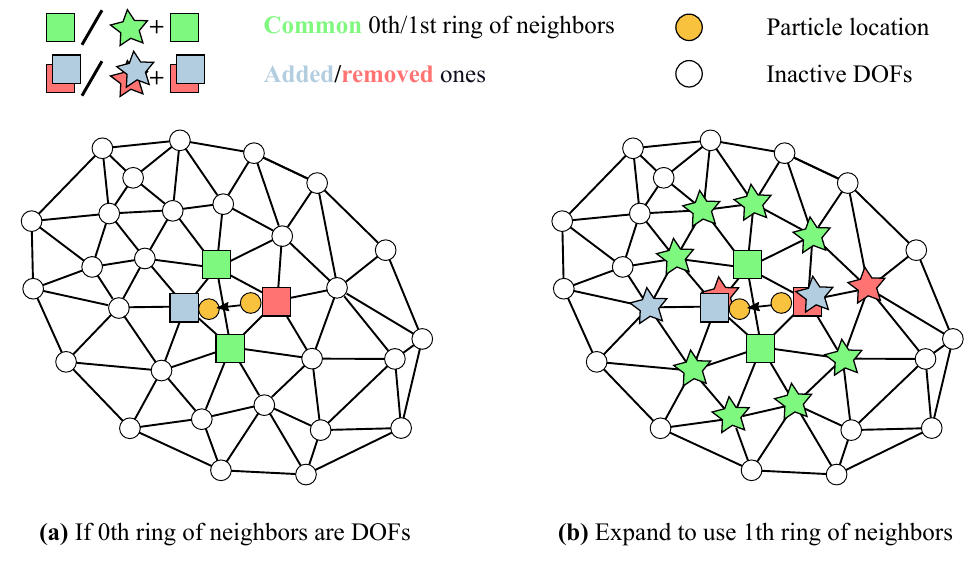}
    \caption{(a) When $\gN_p^0$ alone is selected as the active nearby nodes, as a particle crosses the cell edge, the nodes indicated by the blue and red boxes are added or removed, respectively. Consequently, the weights there must approach zero to ensure $\gC^0$ continuity, resulting in kernel degeneration along the edge. (b) Advancing to $\gC^1$ addresses this issue by incorporating a sufficient number of surrounding nodes to fully encompass the particle.}
    \label{fig:choose:ring}
\end{figure}

\paragraph{Ring Level Selection for Nearby Nodes}
When a specific level of ring neighbors is chosen as the active set of nodes, a natural question arises:

\textit{What is the minimum number of rings required to satisfy the desired properties of the MPM kernel?}

Assume $\gN_p^0$ is selected, and the particle only affects the nodes $i \in \gN_p^0$; since at least $\gC^0$ continuity is required for the kernel, when the particle passes one interface of the cell, the node not on the interface is removed from the active set and the kernel for it must be zero. This leads to the kernel degenerating, i.e., the kernel affecting merely the interface when the particle crosses it, as depicted in Figure~\ref{fig:choose:ring}a. Conversely, opting for 1-ring neighbors, $\gN_p^1$, effectively circumvents this issue, ensuring a non-degenerate kernel interaction as illustrated in Figure~\ref{fig:choose:ring}b.

\paragraph{Computing the Weights}
For conciseness, we replace the function input with the subscript, for example replacing $(\vx_p)$ with $p$ in all related equations, the reconstruction reads:
\begin{equation}
    \label{eq:mls:linear:ours:recon}
    \begin{aligned}
        \begin{bmatrix}
            \hat{u}_p        \\
            \nabla_{\vz} \hat{u}_p \\
        \end{bmatrix} =
        \mM^{-1}_p \mB_p \vu.
    \end{aligned}
\end{equation}

\subsubsection{Required Properties for the Transfer Kernel}
Consider the essential desirable properties for an MPM kernel:
\begin{enumerate}
\item The kernel must be a non-negative partition of unity. This means that the sum of the kernel weights for all nearby vertices of a particle should equal 1; i.e., $\sum_{v \in \gN_p^1} w_v = 1$, with each individual weight $w_v \geq 0, \forall v \in \gN_p^1$.
\item There should be a continuous reconstruction of both the function value and gradient as the particle crosses the cell boundary.
\end{enumerate}
With MLS-MPM, the partition of unity is inherently assured by the characteristics of MLS~\cite{levin1998approximation}, and non-negativity is assured by the uniform sampling of grid nodes (i.e., no degenerate samples). Lastly, \revision{with uniform grid nodes}, MLS-MPM ensures continuous reconstruction by utilizing a B-spline for sample weighting and provides $\gC^1$ continuity.

\revision{
However, this property holds only under uniform grid nodes with spacing properly aligned with the support of the B-spline weighting function. The key is that the B-spline function approaches zero for grid nodes that are about to be added or removed. Consequently, the influence of the discrete change in the set of active nodes on the assembly of $\mM_p$ and $\mB_p$ in \Eqref{eq:mls:linear:recon} is infinitesimal, ensuring no abrupt change during the reconstruction.
}

\revision{
Due to the varying spacing of the unstructured meshes, the weighting function is not guaranteed to approach zero for the added or removed nodes during cell crossings, leading to discontinuous reconstruction. This issue will be addressed in the next section.
}

\revision{
Still, we can borrow the key insight from MLS-MPM that “the weighting function for added or removed nodes should approach zero” and design a scheme to enforce this property. The solution will be discussed and presented in the next section.
}

\subsubsection{Remedying Discontinuous Reconstruction Across the Cell Boundary} 
\label{sec:interpws:mls:unstructured:remedy}
The jump change originates from the discrete change of the active set $\gN_p^1$ during particle cell crossing if their influence on the MLS assembly is nonzero. Hence, an intuitive solution is to artificially diminish their influence on the MLS assembly. To achieve this, we multiply any initial sample weighting function $d_{p, i}$, such as B-spline, by a smooth diminishing function $\eta_{p, i}$; i.e., $d'_{p, i} \leftarrow \eta_{p, i} d_{p, i}$. Here, $\eta_{p, i} \to 0$ for nodes that are added or removed from the active set $\gN_p^1$ during the cell crossing. A detailed proof of the efficacy of this approach is provided in \Appref{appdx:conservation:proof}.

For simplex elements, we design the following $\eta_{p, i}$:
\begin{equation}
    \label{eq:invD:diminish}
    \eta_{p, i} = \sum_{n \in \gN_p^0} B_{p, n} \mA_{i,n},
\end{equation}
\revision{where $\mA$ denotes the mesh's adjacency matrix; an adjacency matrix is a binary matrix representing the connectivity of a graph, where each $\mA_{i,j}=1$ indicates the presence of an edge between nodes $i$ and $j$. A mesh can naturally be viewed as a graph by connecting an edge between every pair of adjacent nodes in a cell. $B_{p, n}$ represents the barycentric coordinate for particle $p$ with respect to a specific node $n \in \gN_p^0$. Taking a 2D simplex, the triangular cell, as an example, the barycentric coordinates of a location $\vx_p$ are triplets of numbers $b_1, b_2, b_3$, subject to $b_1 + b_2 + b_3 = 1$. If these values are considered as masses placed at the nodes of the triangle, the centroid of these masses will be at $\vx_p$. Generally, barycentric coordinates can be calculated as follows:
}
\revision{
\begin{equation}
    \label{eq:barycentric}
    B_{p, n_i} = \frac{\det\left([\vx_{n_1}, \ldots, \vx_{n_{i-1}}, \vx_p, \vx_{n_{i+1}}, \ldots, \vx_{n_{\text{dim}+1}}]\right)}{\det\left([\vx_{n_1}, \ldots, \vx_{n_{\text{dim}+1}}]\right)}.
\end{equation}
}

Combining \eqref{eq:barycentric} with the adjacency matrix definition provides a more geometric interpretation of the design \eqref{eq:invD:diminish}: for $i \in \gN_p^1$, $\eta_{p, i}$ is the sum of the barycentric weights for all $n \in \gN_p^0$ that are connected to $i$. Note that $\eta_{p, i}=1,~\forall v \in \gN_p^0$. \Appref{appdx:conservation:proof} proves the claimed diminishing property for this design in the simplex cell, while \Figref{fig:invD:diminish} provides a simple visual illustration of the proposed \(\eta_{p, i}\).

\subsubsection{Verification of the Proposed Kernel}
\label{sec:verify:continous:kernel}
To verify that the proposed method can produce continuous reconstruction, analytical and numerical solutions of some examples are produced in 1D and 2D test cases, respectively.

\begin{figure}[tb]
    \centering
    \includegraphics[width=0.9\textwidth]{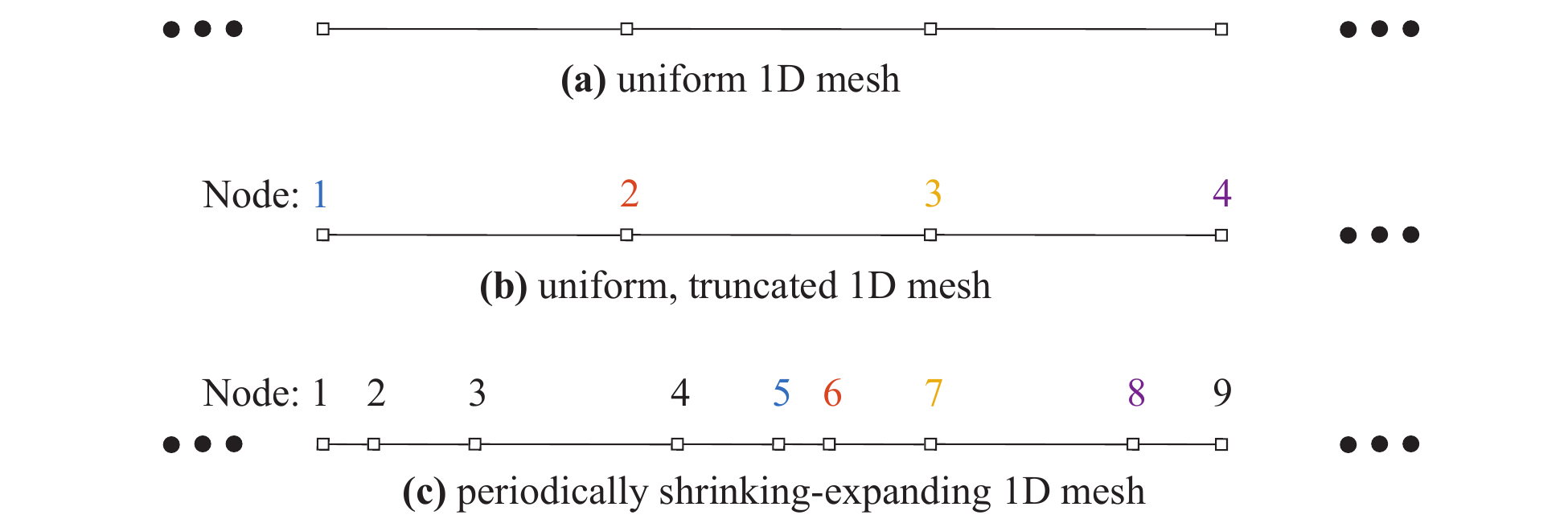}
    \caption{1D meshes: (a) Uniform. (b) Uniform but truncated. (c) Periodically shrinking/expanding.}
    \label{fig:1d_w_setup}
\end{figure}

For the 1D case, the first basic verification is conducted on a uniform mesh, as shown in Figure~\ref{fig:1d_w_setup}a. Figure~\ref{fig:1d_w_uniform_compare}a shows the correct kernel reconstruction with the diminishing function $\eta$, while Figure~\ref{fig:1d_w_uniform_compare}b, as an ablation, shows that the reconstruction is discontinuous even for the simplest uniform mesh, proving the necessity of $\eta$. The detailed setup for this analytical solution is provided in \Appref{appdx:kernel:verify:misc}.

\begin{figure}[tb]
    \centering
    \includegraphics[width=0.9\textwidth]{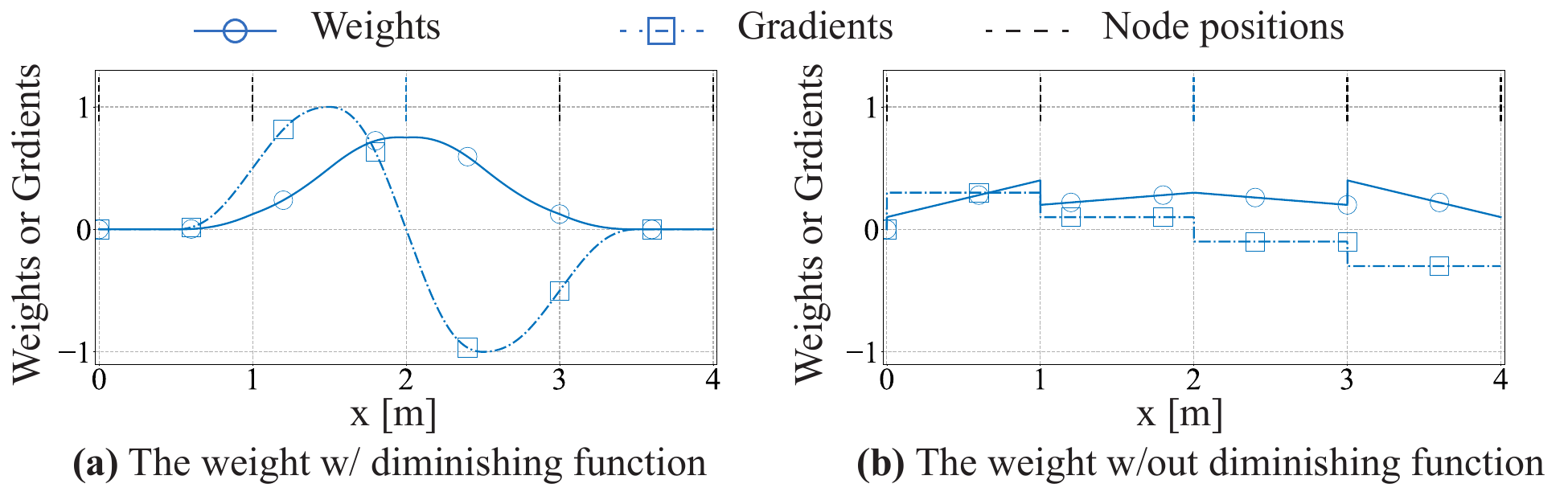}
    \caption{Comparison of kernel values and gradient estimations on a uniform 1D mesh (a) with and (b) without applying the diminishing function.}
    \label{fig:1d_w_uniform_compare}
\end{figure}

\begin{figure}[tb]
    \centering
    \includegraphics[width=0.9\textwidth]{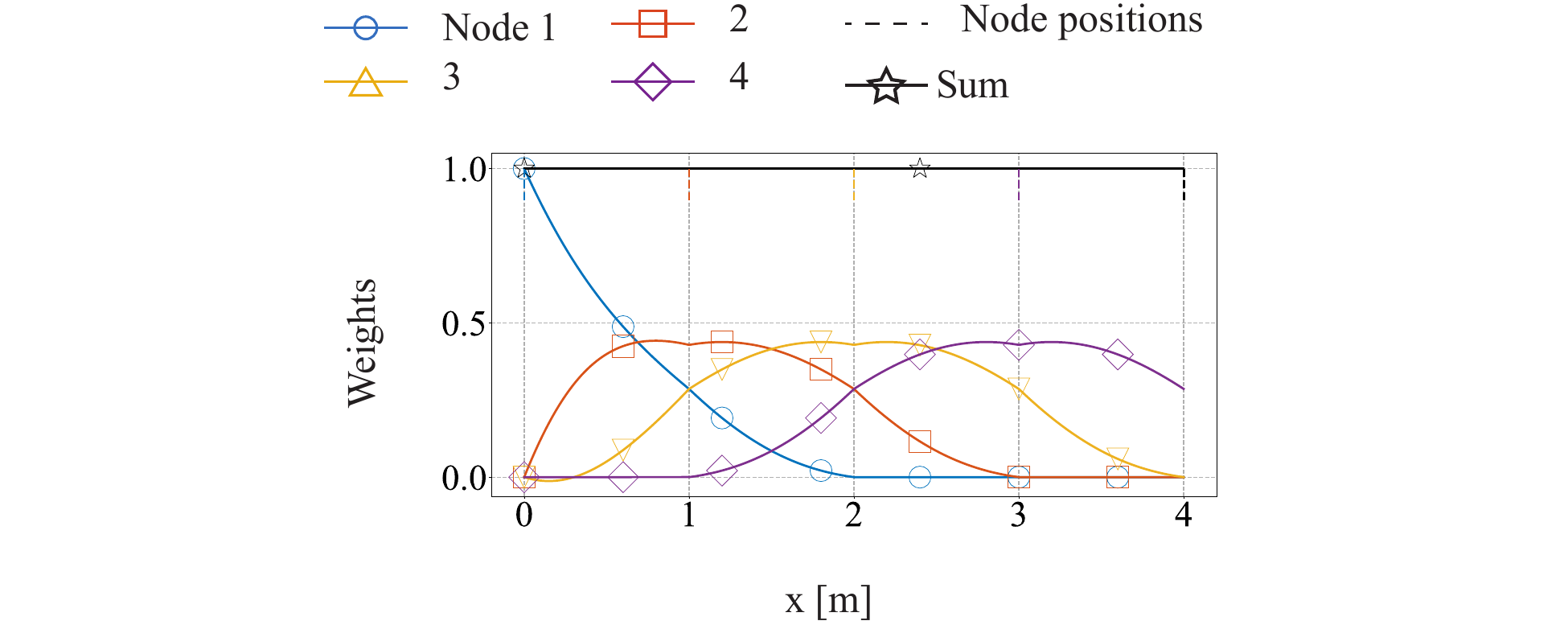}
    \caption{The negative weight for Node 3 (yellow) when the particle is in the boundary cell and there is no extra layer.}
    \label{fig:1d_w_uniform_truncated}
\end{figure}

Note that when a particle is in a boundary cell, such as Node 3 in Figure~\ref{fig:1d_w_uniform_truncated}, negative weight values may be obtained for some interior nodes. This is caused by kernel degeneration due to the absence of a first ring of neighbors on the boundary side during MLS sampling. To remedy this problem, which can cause numerical instabilities~\cite{andersen2010analysis}, an extra layer of cells beyond the real boundary is included in our experiments.

\begin{figure}[!t]
    \centering
    \includegraphics[width=0.9\textwidth]{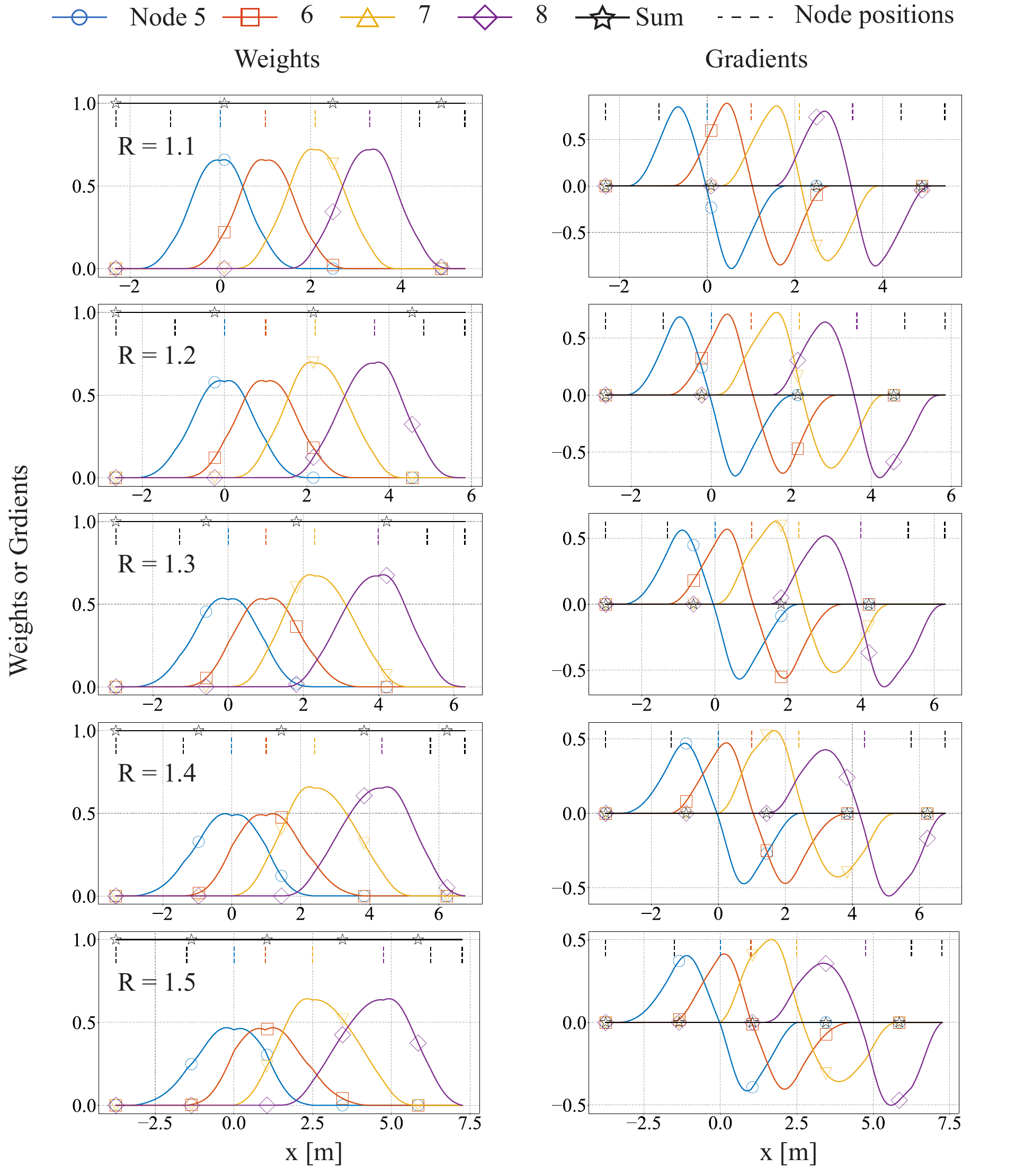}
    \caption{Kernel values (left column) and gradient estimations (right column) on a periodically shrinking/expanding 1D mesh with varying size transition rate $R$.}
    \label{fig:1d_w_periodic}
\end{figure}

The next verification is on a periodically shrinking and expanding 1D mesh (Figure~\ref{fig:1d_w_setup}c). The mesh contains cyclic cell sizes of $[\dots,1,R,R^2,R,1,\dots]$ designed to mimic the transition between varying mesh resolutions. The size transition ratios tested range from $1.1$ to $1.5$ to correspond with typical transition ratios in FEM analysis. Kernel reconstructions are conducted on Nodes 5, 6, 7, and 8 as a full cycle. As shown in Figure~\ref{fig:1d_w_periodic}, both the kernel and the gradient estimations are piece-wise $\gC^1$.

\revision{
We note that kinks can be seen in the function value plots in \Figref{fig:1d_w_periodic}, leading to the potential confusion that the gradient is discontinuous. However, the plots of kernel values are $\vw(\vz ; \vx_p)|_{\vz=\vx_p} = \vw(\vx_p ; \vx_p)$ versus $\vx_p$, as in \Eqref{eq:kernel:define}; for the misleading statement to hold true, the gradient should have been taken with respect to the plot axis $\vx_p$, i.e., $\nabla_{\vx_p} \vw(\vx_p ; \vx_p)$, which is not the case according to \Eqref{eq:kernel:gradient:define}.
}

The ablation tests are also performed on a 2D unstructured mesh featuring a ``\&'' shape. The comparison between scenarios with and without the use of \(\eta\), as shown in Figure~\ref{fig:2d_w_compare}a and Figure~\ref{fig:2d_w_compare}b respectively, validates the importance of \(\eta\) and the effectiveness of the proposed method in managing unstructured meshes.

\revision{
Finally, we experimentally show that \ours \ can seamlessly be combined with other schemes, such as the Affine Particle in Cell (APIC) scheme \cite{jiang2015affine,jiang2017angular} to help conserve the total angular momentum of the system. For details, see Appendix \ref{appdx:apic:conservation}.
}

\begin{figure}[tb]
    \centering
    \includegraphics[width=0.9\textwidth]{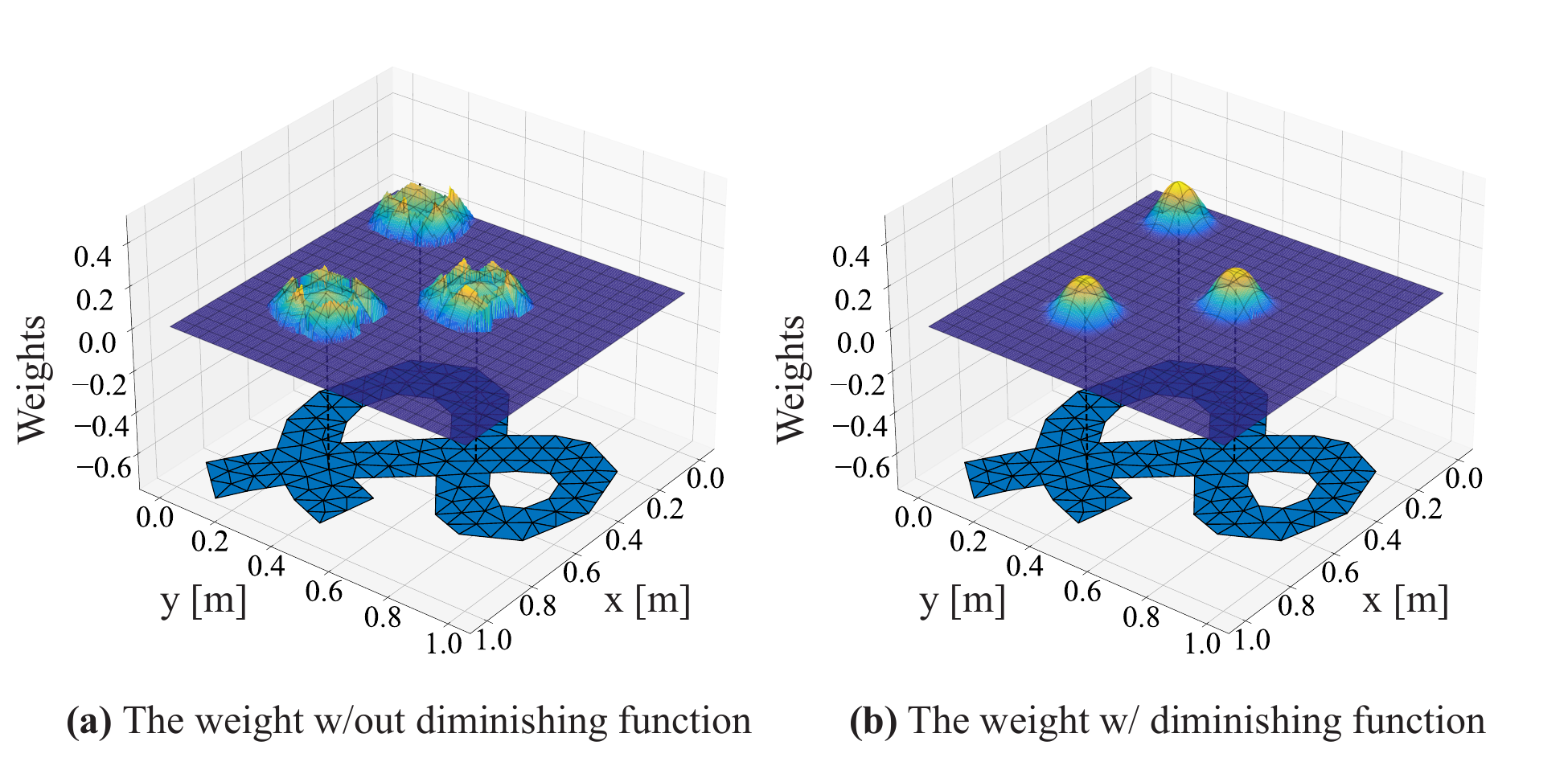}
    \caption{Comparison of the kernel on an unstructured mesh (a) without and (b) with the application of the diminishing function.}
    \label{fig:2d_w_compare}
\end{figure}

\section{Experiments and Results}
\label{sec:experiments}

To demonstrate and assess the effectiveness of our approach, particularly its reduced cross-cell error owing to the continuous gradient reconstruction, we have chosen representative test cases from prior related studies. \revision{Our benchmarking relies on analytical solutions when feasible; alternatively, we use the standard MPM with B-spline or GIMP basis functions at a sufficiently high resolution.} All experiments were carried out on a single PC equipped with an Intel\textsuperscript{\textregistered} Core\texttrademark\ i9-10920X CPU.

\subsection{1D Vibrating Bar}

\begin{figure}[tb]
    \centering
    \includegraphics[width=0.9\textwidth]{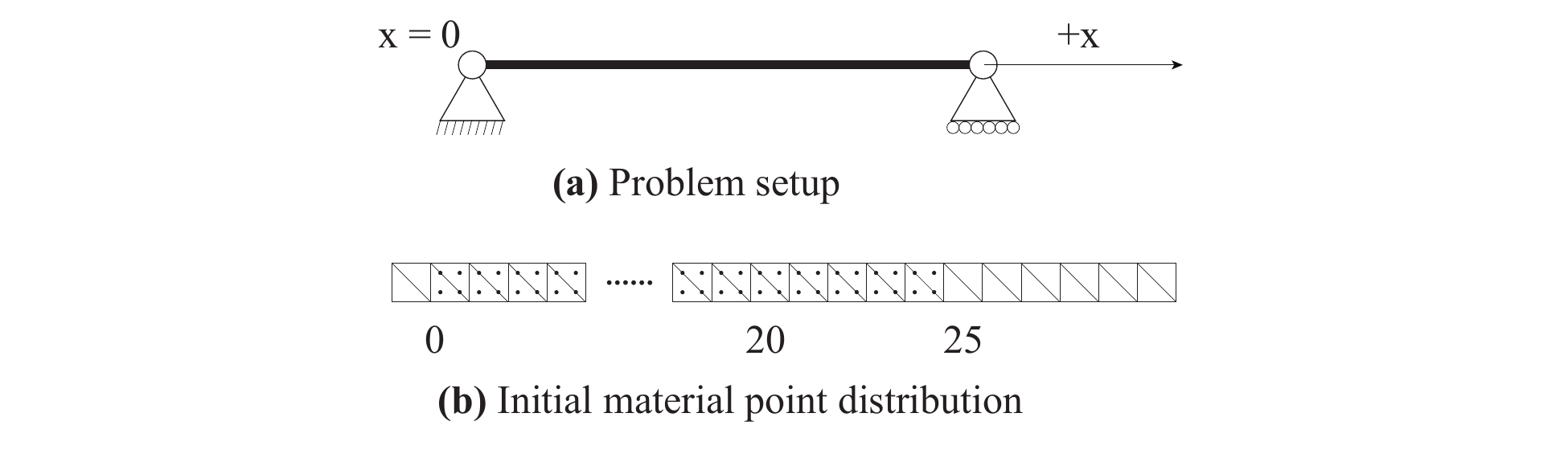}
    \caption{Setup of the 1D bar vibration test.}
    \label{fig:1d_bar}
\end{figure}

Consider the 1D vibration bar problem shown in Figure~\ref{fig:1d_bar}a \cite{wilson2021distillation}. The left end of the bar is fixed and the right has a sliding condition in the $x$ direction. The physical properties of the bar are: $E=100$\,Pa, $\nu = 0$, $L=25$\,m, and $\rho=1$\,kg/m$^3$. The initial velocity conditions are $\dot{u}(x, t=0)=v_0 \sin \left(\beta_1 x\right)$ with $\beta_1=\frac{\pi}{2 L}$.

\begin{figure}[tb]
    \centering
    \includegraphics[width=0.9\textwidth]{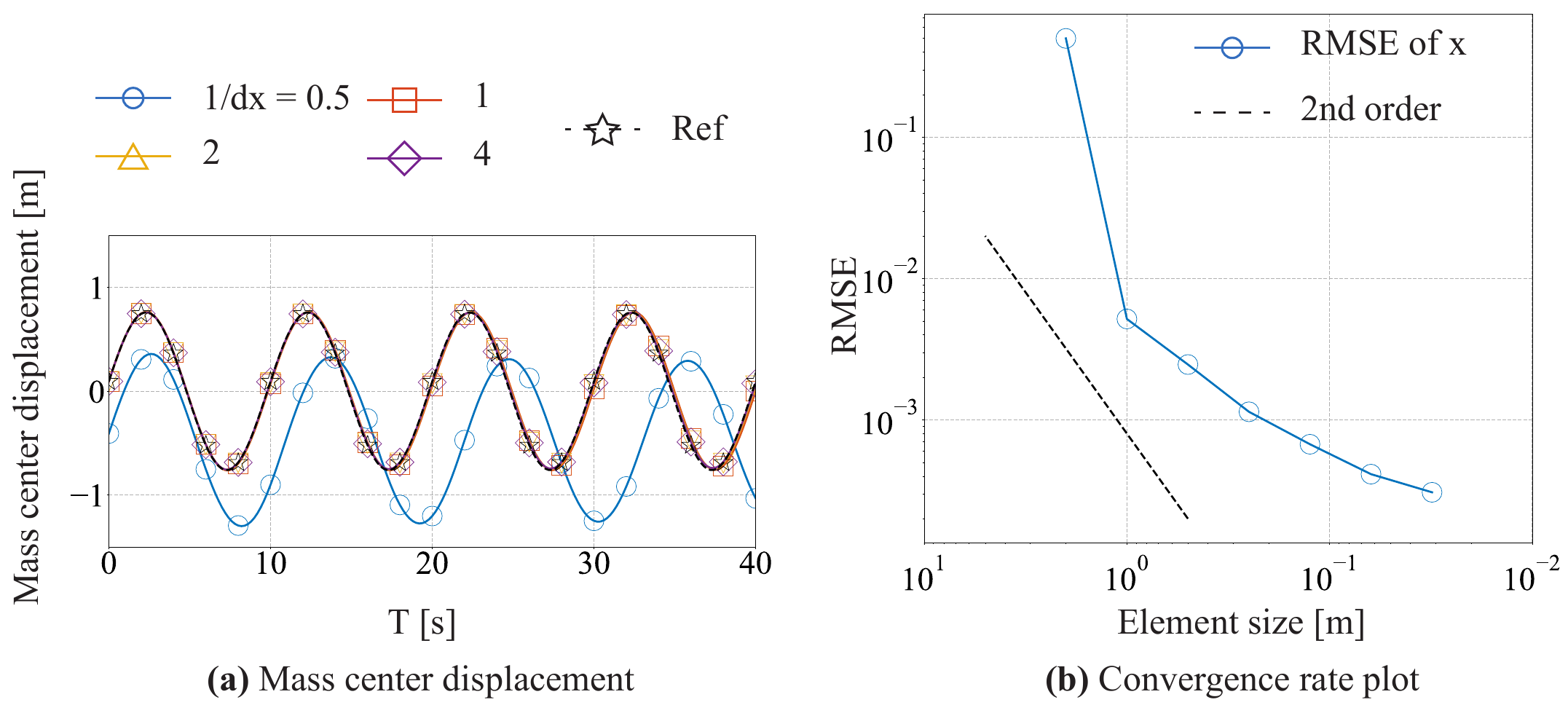}
    \caption{Plots of (a) the center of mass displacement of the bar and (b) convergence rate of the RMSE of particle displacements.}
    \label{fig:1d_bar_convergence}
\end{figure}

The analytical expression of the center of mass in this problem is
\begin{equation}
    x(t)_{C M}=\frac{L}{2}+\frac{v_0}{\beta_1 L \omega_1} \sin \left(\omega_1 t\right),
\end{equation}
and
\begin{equation}
    \dot{u}(t)_{C M}=\frac{v_0}{\beta_1 L} \cos \left(\omega_1 t\right),
\end{equation}
with $\omega_1=\beta_1 \sqrt{E / \rho}$.

The original experiments in~\cite{wilson2021distillation} included two velocity settings: $v_0=0.1$\,m/s and $v_0=0.75$\,m/s. The lower velocity setting, $v_0=0.1$\,m/s, was utilized solely for validation against the linear kernel MPM, as it does not involve cell crossings. Here, we focus on the higher-velocity setting to assess the effectiveness of \ours\ in addressing cell-crossing errors.

Figure~\ref{fig:1d_bar_convergence} presents the convergence rate of \ours\ with grid refinement. Specifically, Figure~\ref{fig:1d_bar_convergence}a shows that, with the exception of the coarsest resolution $dx=2$\,m, \ours\ consistently achieves high accuracy, with a maximum root mean square error (RMSE) of $0.554\%$ in particle displacements. Figure~\ref{fig:1d_bar_convergence}b indicates that the convergence rate is approximately second order on coarser grids, but it starts to level off on finer grids due to mounting temporal errors, aligning with established MPM theory~\cite{jiang2016material}.

\begin{figure}[tb]
    \centering
    \includegraphics[width=0.9\textwidth]{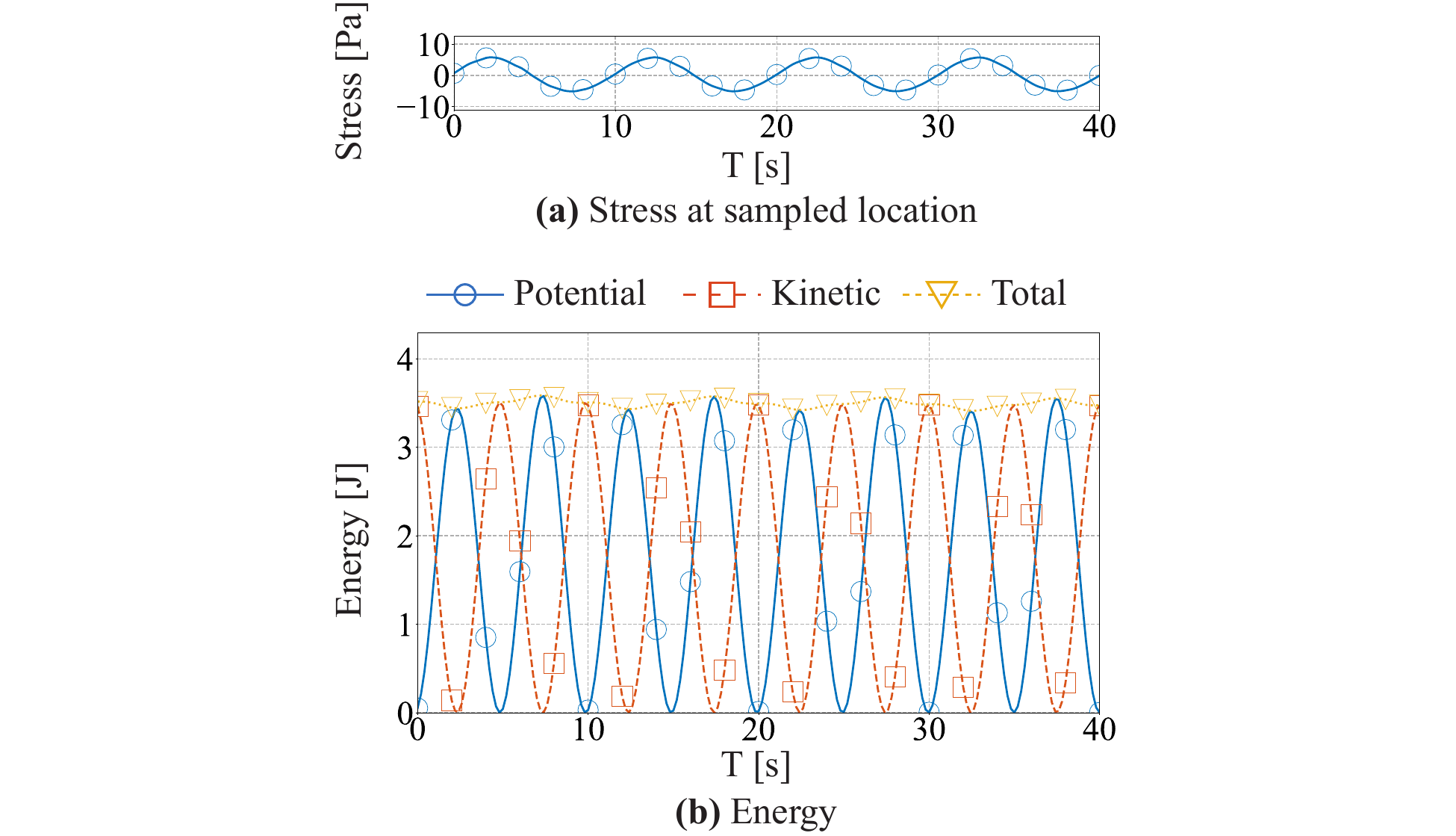}
    \caption{Plots of (a) the stress at the sampled particle closest to $[17.5, 0.5]$ and (b) the system energy.}
    \label{fig:1d_bar_stress_eng}
\end{figure}

Figure~\ref{fig:1d_bar_stress_eng}a displays the stress profile for a particle located at $x_0 = 12.75$\,m, which undergoes the most frequent cell crossings during its vibrational motion. The outcomes achieved with \ours\ showcase a remarkable level of smoothness and precision. Figure~\ref{fig:1d_bar_stress_eng}b illustrates the energy dynamics for the entire system, revealing that the system's energy is largely conserved throughout the simulation, with only slight fluctuations.
\revision{
We believe the fluctuations in the energy plot are due to the symplectic integration schemes or the combined effect of the FLIP scheme. Similar phenomena have been observed in previous works~\cite{donnelly2005symplectic} and~\cite{tran2019temporal}, respectively.
}
These findings collectively underscore the robustness and precision of \ours\ in managing intense cell crossings by particles.

\subsection{2D Collision Disks}

\begin{figure}[tb]
    \centering
    \includegraphics[width=0.9\textwidth]{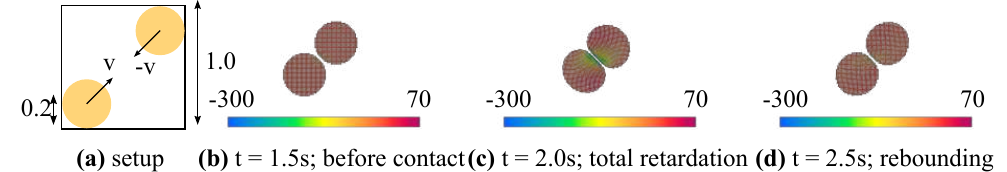}
    \caption{2D collision disks: (a) Problem setup. (b)--(d) Snapshots of the simulation at 1.5s, 2.0s, and 2.5s.}
    \label{fig:2d_disk}
\end{figure}

Next, we considered the problem of two colliding elastic disks shown in Figure~\ref{fig:2d_disk}a \cite{wilson2021distillation}. The physical properties of the disks are: $E=1000$\,Pa, $\nu = 0.3$, $\rho=1000$\,kg/m$^3$, and $\vv= \pm (0.1,0.1)$\,m/s for the left and right disks, respectively. Each disk was discretized with $462$ material points using the triangle mesh of a disk. The background mesh was generated using Delaunay triangulation with a target element size of $0.025$\,m. We plot key snapshots of the simulation in Figure~\ref{fig:2d_disk}b--d, with the impact at $1.5$\,s, total retardation right before $2.0$ s, and rebounding separation right before $2.5$\,s.

\begin{figure}[tb]
    \centering
    \includegraphics[width=0.9\textwidth]{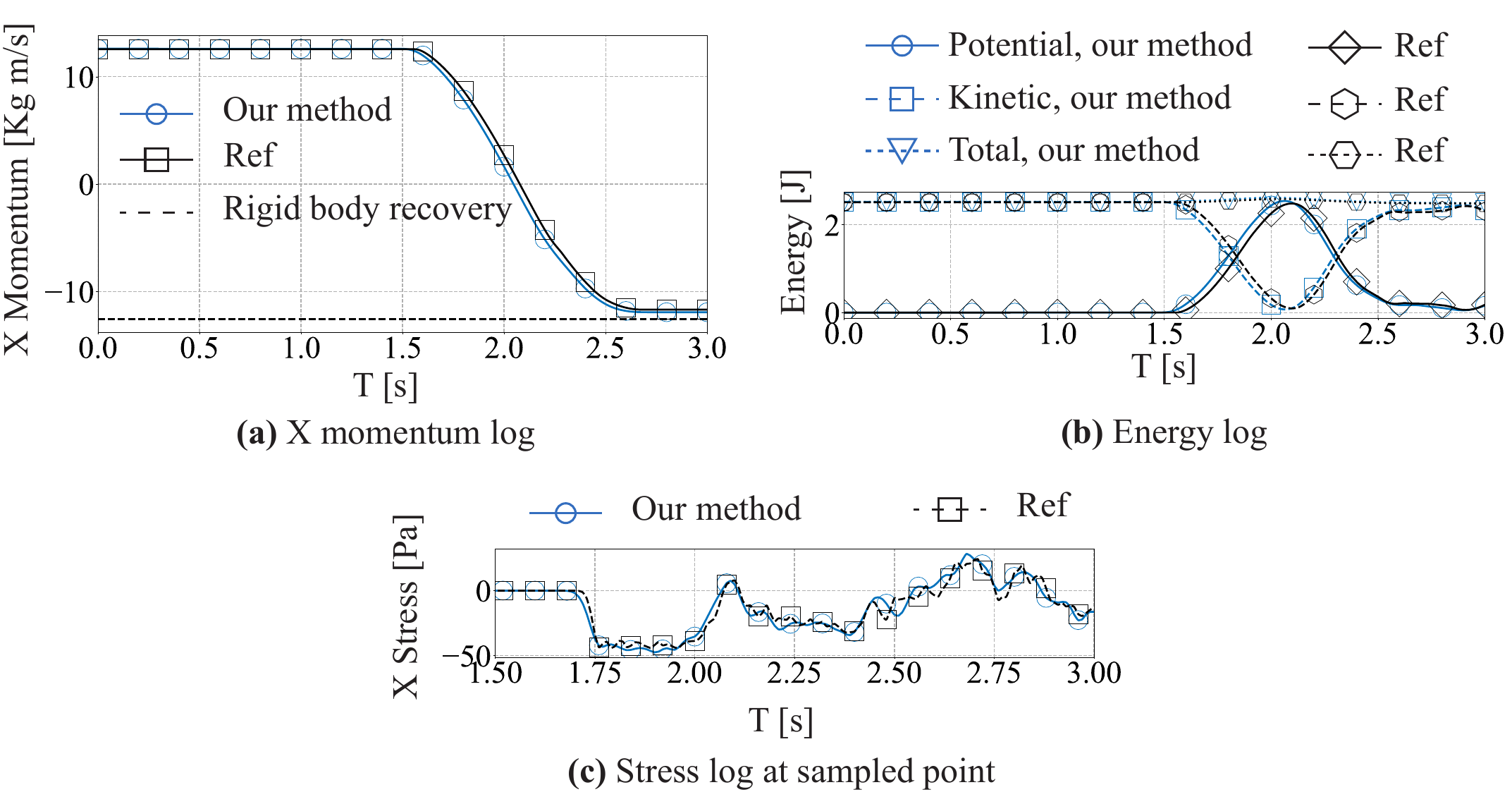}
    \caption{Plots of (a) the momentum in the $x$-direction of the left disk, (b) the energies of the system, and (c) the stress at the sampled particle closest to the center of the left disk.}
    \label{fig:2d_disk_plots}
\end{figure}

Quantitative results for the collision disks are presented in Figure~\ref{fig:2d_disk_plots}. In Figure~\ref{fig:2d_disk_plots}a, a comparison of momentum recovery during collision between \ours\ and the B-spline MPM with sufficiently high resolution is shown. While a perfect momentum recovery, such as that in the rigid collision (dashed gray line in Figure~\ref{fig:2d_disk_plots}a), is not expected, \ours\ approaches this limit effectively.
Similarly, Figure~\ref{fig:2d_disk_plots}b displays the kinetic energy recovery during the collision. The results indicate that \ours\ effectively preserves the system energy.
Figure~\ref{fig:2d_disk_plots}c illustrates the stress log at the center particle of the left disk. The results align perfectly with the reference, but only for negligible fluctuations, showing that \ours\ does not generate spurious stress oscillations either from the collision or cell crossings.

\subsection{2D Cantilever With Rotations}

\begin{figure}[tb]
    \centering
    \includegraphics[width=0.9\textwidth]{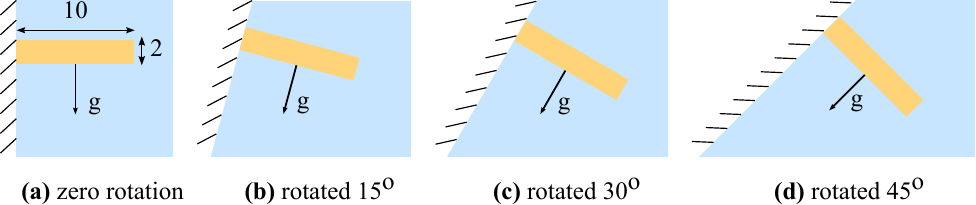}
    \caption{2D cantilever problem under different rotation angles: (a) $\SI{0}{\degree}$, (b) $\SI{15}{\degree}$, (c) $\SI{30}{\degree}$, and (d) $\SI{45}{\degree}$.}
    \label{fig:2d_cantilever}
\end{figure}

Although an unstructured mesh offers the adaptability to match any boundary shape, the cell orientation, or a different tessellation, can potentially affect accuracy. To illustrate the precision of our method under various rotation angles, we examined the case of a cantilever under its own weight, as shown in Figure~\ref{fig:2d_cantilever}a \cite{wilson2021distillation}. The cantilever's physical characteristics are as follows: length $l= 10$\,m, height $h = 2$\,m, gravitational acceleration $g=9.81$\,m/s$^2$, Young's modulus $E=100000$\,Pa, Poisson's ratio $\nu = 0.29$, and density $\rho=2$\,kg/m$^3$. The cantilever was discretized with uniformly spaced particles in both directions. We created the background mesh using Delaunay triangulation, aiming for an element size of $0.5$\,m. Additionally, we rotated the mesh of the cantilever by angles of $15^\circ$, $30^\circ$, and $45^\circ$ to showcase the resilience of our method to rotation, as depicted in Figure~\ref{fig:2d_cantilever}b--d.

\begin{figure}[tb]
    \centering
    \includegraphics[width=0.9\textwidth]{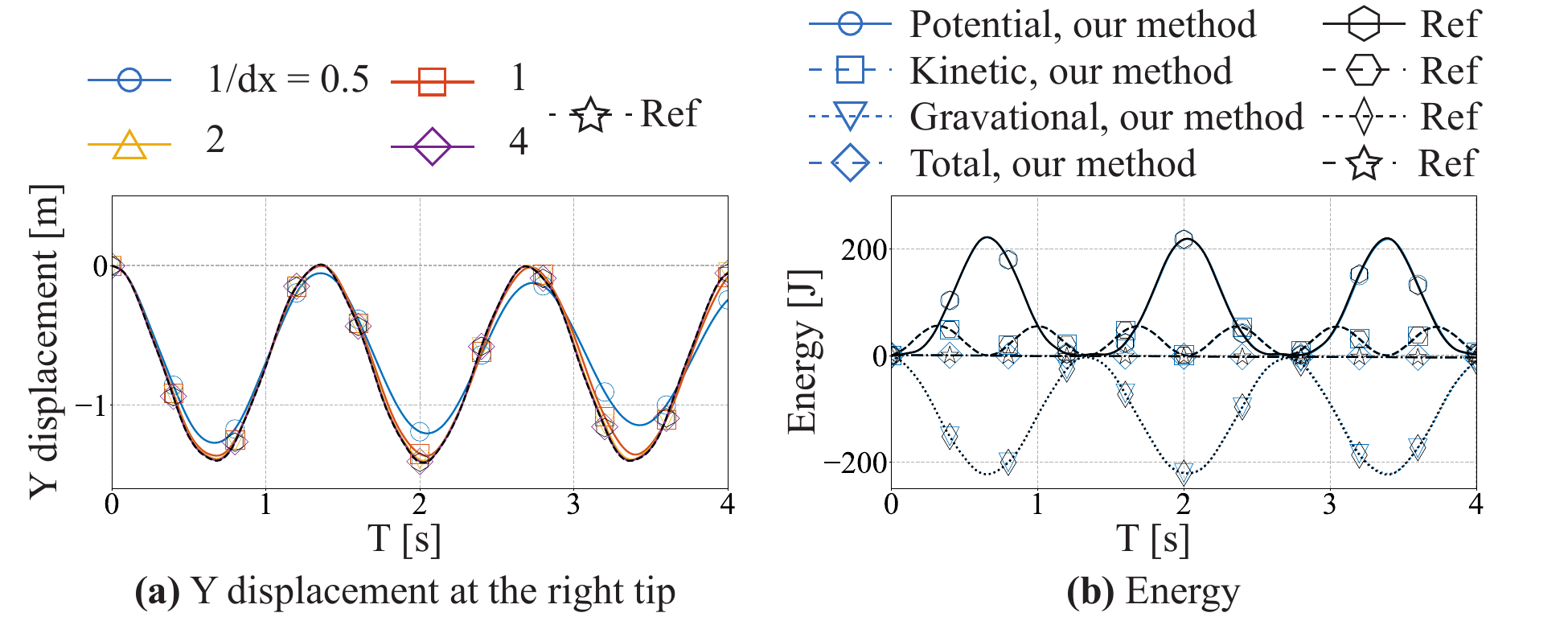}
    \caption{Plots of (a) the displacement in the $y$-direction at the right tip of the cantilever and (b) the energies of the system.}
    \label{fig:2d_cantilever_y_eng}
\end{figure}

Figure~\ref{fig:2d_cantilever_y_eng}a illustrates the spatial convergence of the $y$-displacement at the right tip of the cantilever beam under grid refinement. Notably, except for the coarse resolutions of $dx = 2$\,m and $dx = 1$\,m, errors for all finer resolutions are negligible. Therefore, a resolution of $dx = 0.5$\,m was employed to ensure sufficient accuracy for all subsequent plots in this experiment. Figure~\ref{fig:2d_cantilever_y_eng}b demonstrates that \ours\ effectively conserves energy, aligning with the reference B-spline MPM.

\begin{figure}[tb]
    \centering
    \includegraphics[width=0.9\textwidth]{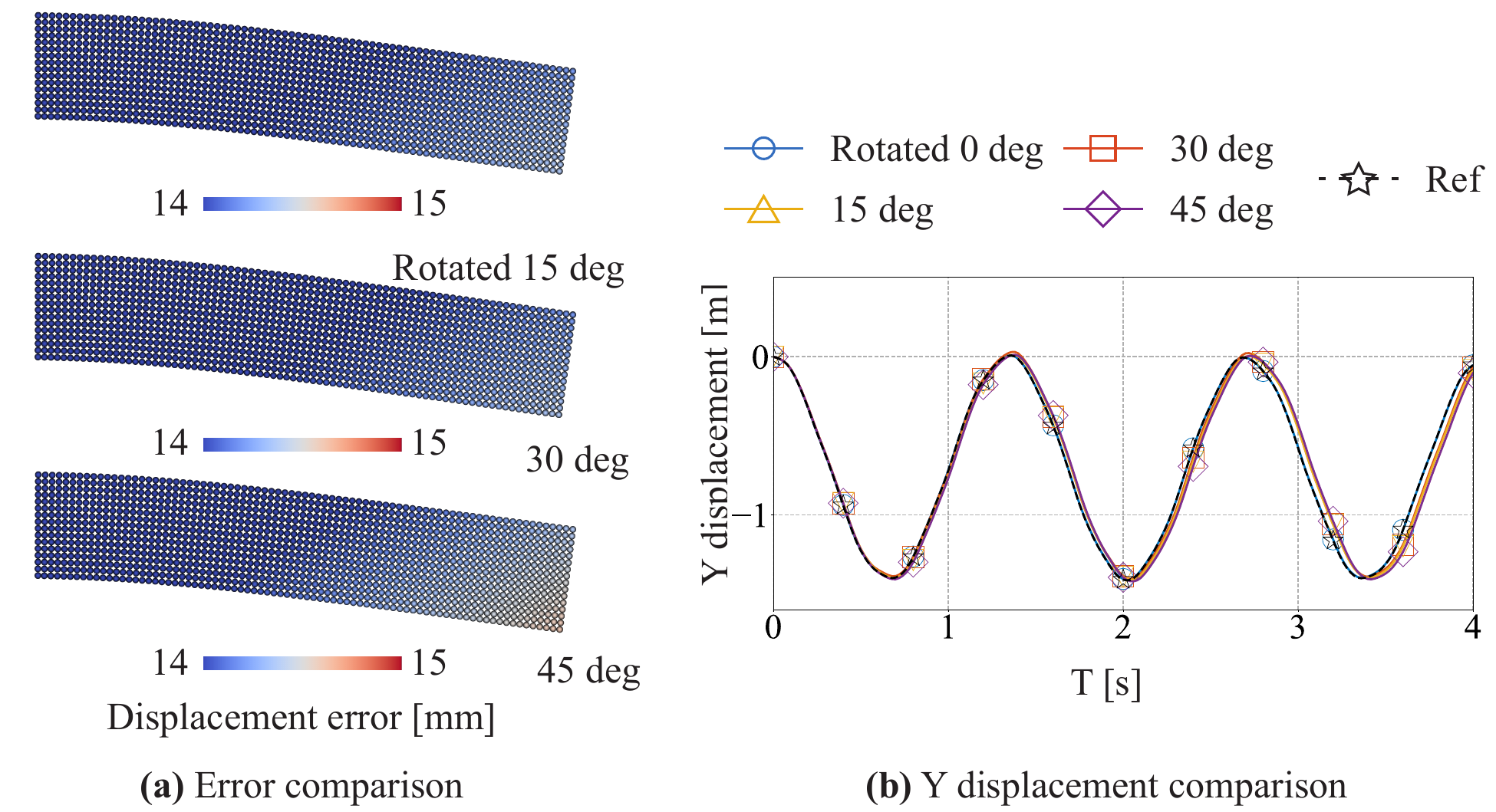}
    \caption{(a) 
    Snapshots of the cantilever with different initial rotating angles. (b) Comparison of the displacement in the $y$-direction at the right tip of the cantilever.}
    \label{fig:2d_cantilever_rotated}
\end{figure}

Figure~\ref{fig:2d_cantilever_rotated}a shows snapshots of the cantilever with different initial mesh rotation angles. The results indicate that \ours\ is robust under mesh rotation with only minor visible errors. Figure~\ref{fig:2d_cantilever_rotated}b quantitatively compares the $y$-displacement at the right tip. The results align well overall with both zero rotation and the reference, with errors of $1.27\%$, $2.18\%$, and $4.72\%$ for $15^\circ$, $30^\circ$, and $45^\circ$ rotation, respectively.

\begin{figure}[tb]
    \centering
    \includegraphics[width=0.9\textwidth]{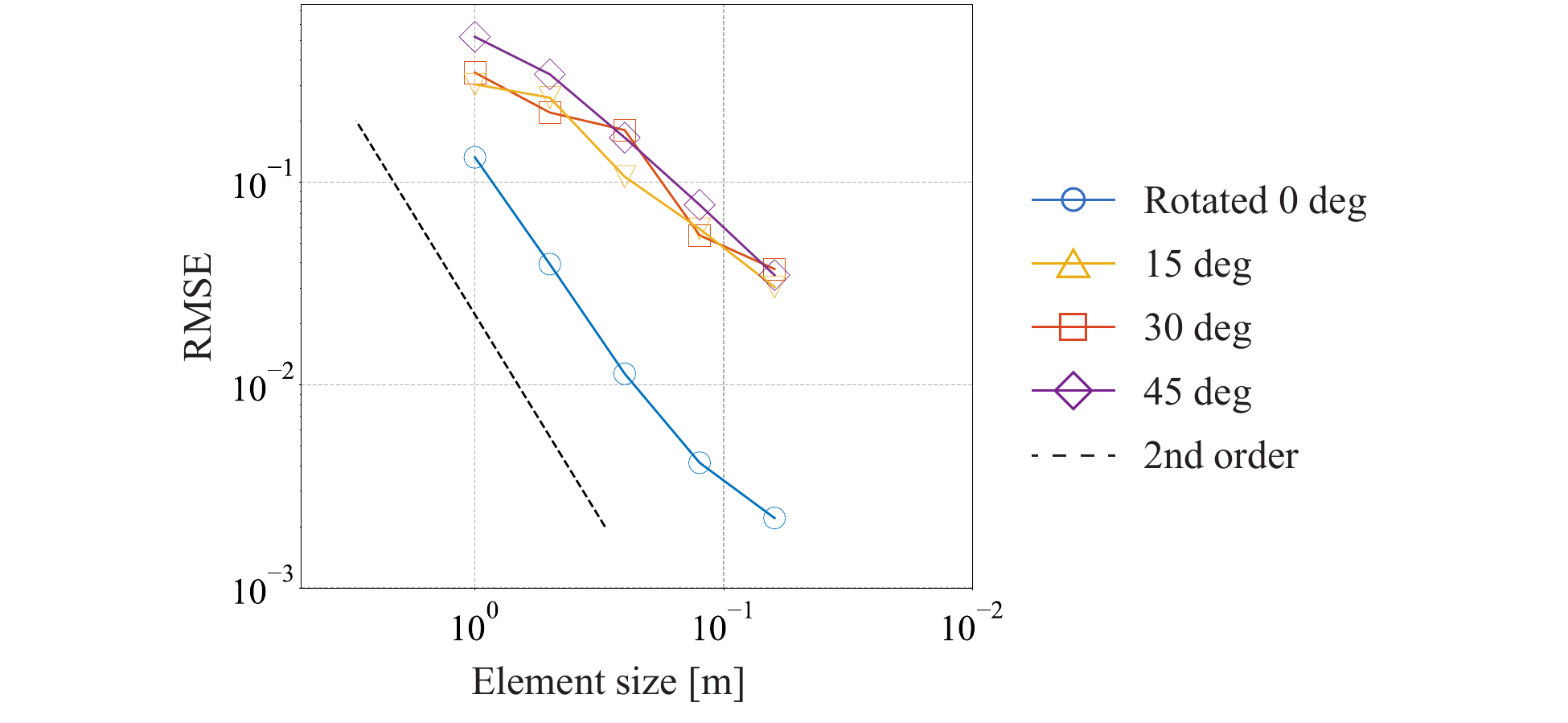}
    \caption{Convergence plot of the RMSE of particle displacements.}
    \label{fig:2d_cantilever_convergence}
\end{figure}

The convergence rate of \ours\ is demonstrated in Figure~\ref{fig:2d_cantilever_convergence}. The results indicate that for cases with zero rotation, the convergence rate is second order. While the RMSE increases slightly for cases with mesh rotation, it still remains in the magnitude of $1E-2$, and the convergence rate remains near second order.
These combined results demonstrate the robustness and accuracy of \ours\ under mesh rotation.

\revision{
\subsection{2D Ball in a Wavy Channel}
}

\revision{
\begin{figure}
    \centering
    \includegraphics[width=0.9\textwidth]{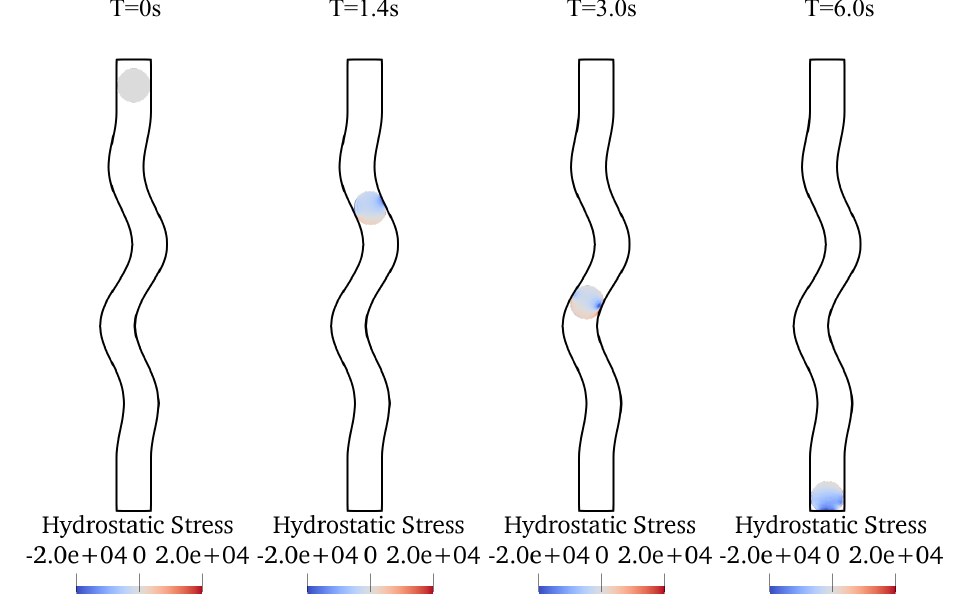}
    \caption{2D ball in a wavy channel: snapshots of the simulation at 0s, 1.4s, 3.0s, and 6.5s.}
    \label{fig:2d_ball_wavy}
\end{figure}
}

\revision{
We highlight the proposed method's ability to conform to irregular geometric boundaries. To this end, we consider the case of a ball freely falling but confined in a wavy channel, as shown in the leftmost subfigure of \Figref{fig:2d_ball_wavy}. The physical properties of the ball are: radius \(r=1.0\) m, Young's modulus \(E=100\) kPa, Poisson's ratio \(\nu = 0.29\), and density \(\rho=400\) kg/m\(^3\). The ball was discretized with 4735 randomly sampled material points. The background wavy channel has a sinusoidal shape. 
}

\revision{
The left wall of the channel has an analytical expression of:
\begin{equation}
    \begin{aligned}
        \begin{array}{cc}
            x_l= &
            \begin{cases}
                A \sin (\omega_1 y) \sin (\omega_2 y), & 0 < y \leq 20.0 \text{ m} \\
                0, & \text{Otherwise}, \\
            \end{cases}
        \end{array}
    \end{aligned}
\end{equation}
where \(A=1.0\) m, \(\omega_1 = \frac{\pi}{5}\) rad/m, and \(\omega_2 = \frac{\pi}{20}\) rad/m. The right wall is created by shifting the left wall by 2.0 m, i.e., \(x_r = x_l + 2.0\) m. The background mesh was generated using Delaunay triangulation with a target element size of 0.05 m, resulting in 43360 cells.
}

\revision{
\begin{figure}[htbp]
    \centering
    \includegraphics[width=0.9\textwidth]{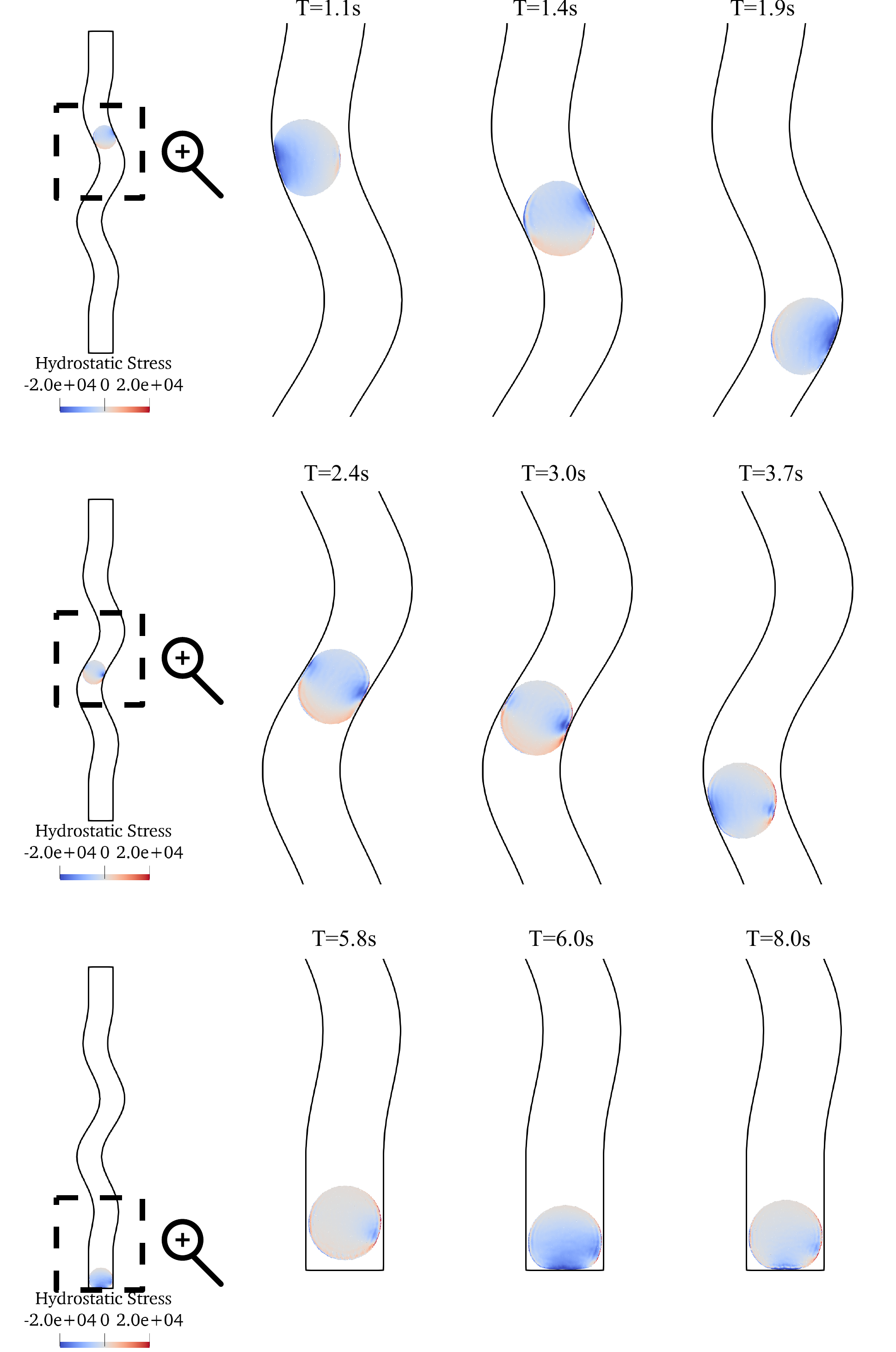}
    \caption{Zoomed-in view of the ball's deformation in the wavy channel at key timestamps. From top to bottom, timestamps around 1.4s, 3.0s, and 6.0s.}
    \label{fig:2d_ball_wavy_zoom}
\end{figure}
}

\revision{
The snapshots of the simulation at 1.4s, 3.0s, and 6.5s are shown in \Figref{fig:2d_ball_wavy}, while the zoomed-in views of the ball's deformation and hydrostatic stress are shown in \Figref{fig:2d_ball_wavy_zoom}. \ours \ captures both the bouncing into the wavy channel (at 1.1s, 1.9s, 3.7s) and the squeezing through the narrow part of the channel (at 1.4s, 2.4s, 3.0s) with no rasterization artifacts, proving the robustness of the proposed method in simulating under general mesh tessellation and handling irregular geometry boundaries.
}

\subsection{3D Slope Failure}
\label{sec:experiment:3d_slope}

Next, the performance of the proposed approach was investigated when dealing with material behavior involving plasticity.
To this end, we simulated failure of a 3D slope comprosed of sensitive clay. 
The problem geometry was adopted from \cite{zhao2023circumventing} and is illustrated in Figure~\ref{fig:3d_slope_geometry}.
Here, the bottom boundary of the slope is fixed and the three lateral sides are supported with rollers.
To model the elastoplastic behavior of the sensitive clay in an undrained condition, a combination of Hencky elasticity and J2 plasticity with softening was used.
The softening behavior is governed by the following exponential form: $\kappa = (\kappa_p - \kappa_r) e^{-\eta \varepsilon_q^p} + \kappa_r$, where $\kappa$, $\kappa_p$, and $\kappa_r$ denote the yield strength, the peak strength, and the residual strength, respectively, $\varepsilon_q^p$ denotes the equivalent plastic strain, and $\eta$ is a softening parameter.
The specific parameters were adopted from \cite{zhao2023circumventing}.
They are a Young's modulus of $E = 25$\,MPa, a Poisson's ratio of $\nu = 0.499$, a peak strength of $\kappa_p = 40.82$\,kPa, a residual strength of $\kappa_r = 2.45$\,kPa, and a softening parameter of $\eta = 5$.
The assigned soil density is $\rho = 2.15$\,t/m$^3$.

\begin{figure}[htbp]
    \centering
    \includegraphics[width=0.6\textwidth]{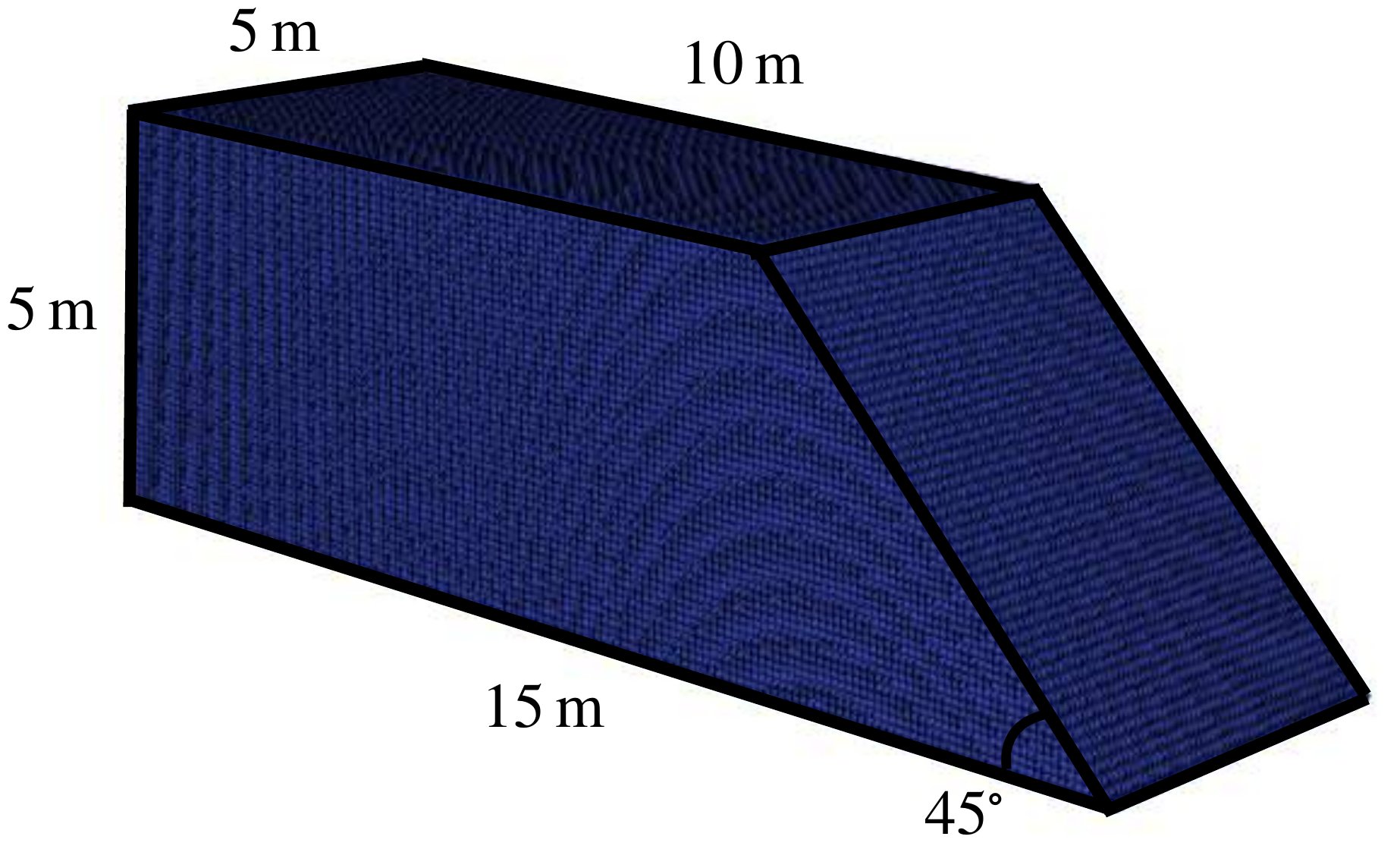}
    \caption{Problem geometry of the 3D slope failure (adapted from \cite{zhao2023circumventing}).}
    \label{fig:3d_slope_geometry}
\end{figure}

\begin{figure}[htbp]
    \centering
    \subfloat[$t=1.5$ s]{\includegraphics[width=0.9\textwidth]{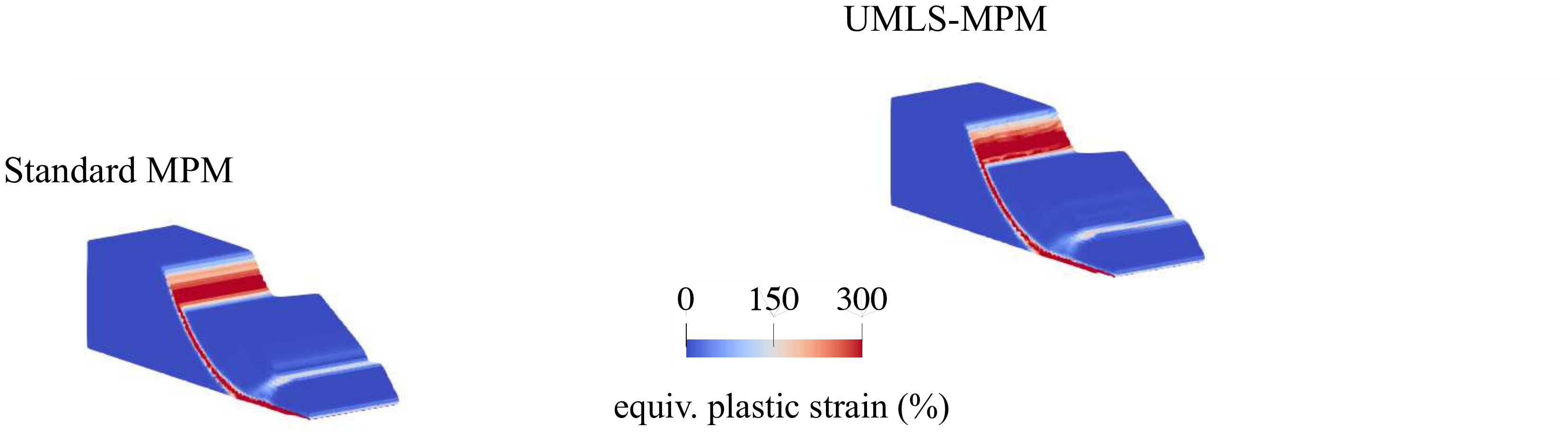}} \\
    \subfloat[$t=2.5$ s]{\includegraphics[width=0.9\textwidth]{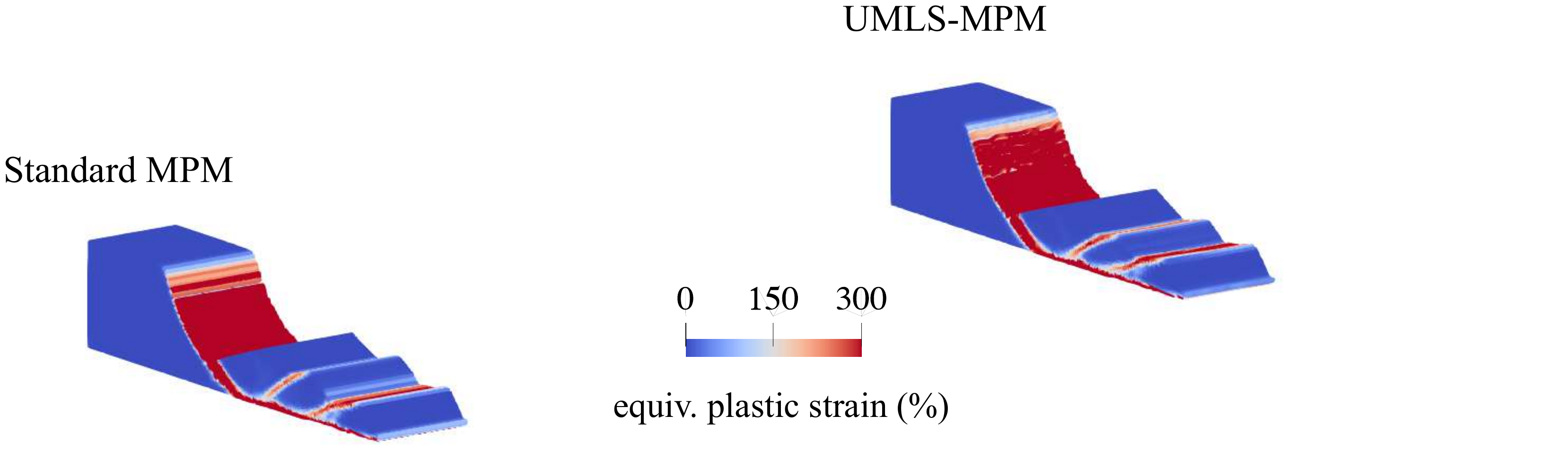}} \\
    \subfloat[$t=3.5$ s]{\includegraphics[width=0.9\textwidth]{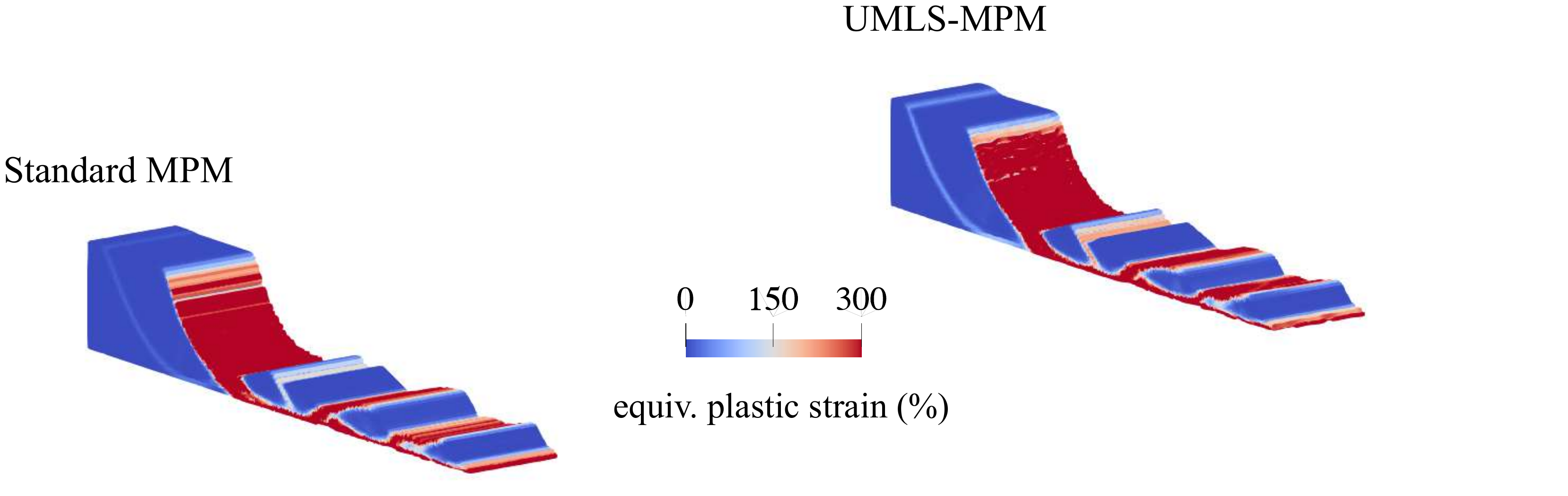}} \\
    \subfloat[$t=5.5$ s]{\includegraphics[width=0.9\textwidth]{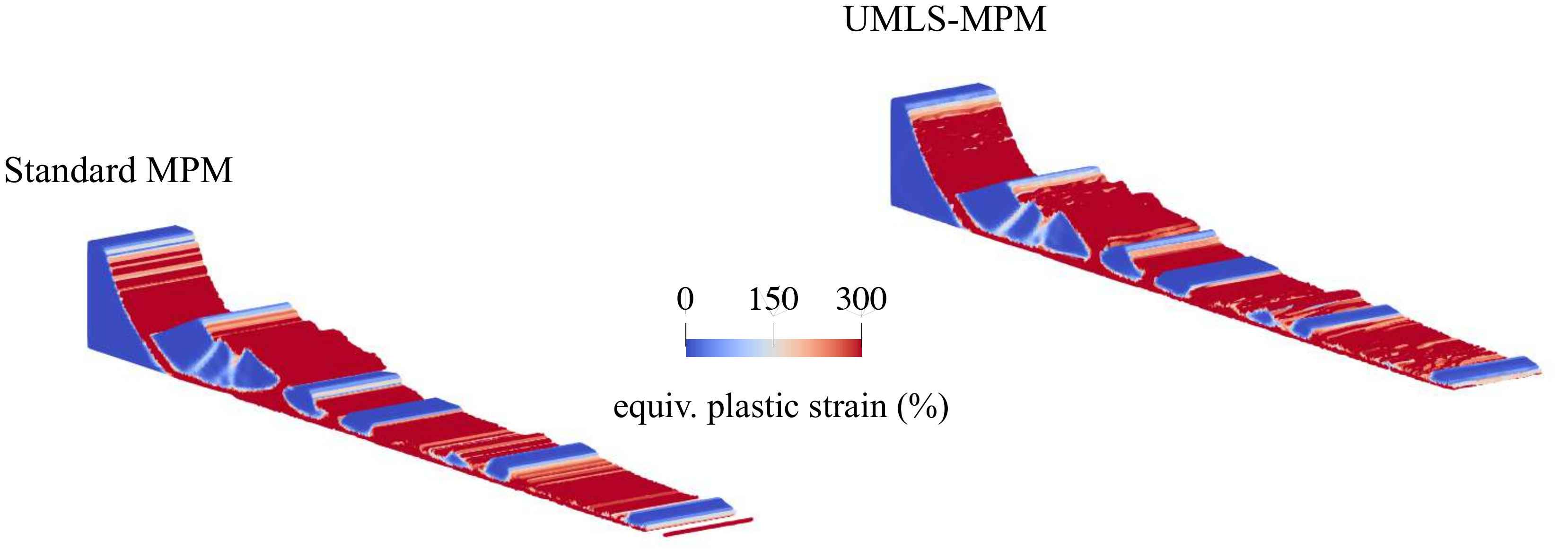}}
    \caption{\revision{Snapshots of the solutions from the \ours\ and the standard MPM with GIMP basis functions in Zhao~\etal~\cite{zhao2023circumventing}. Particles are colored based on the equivalent plastic strain.}}
    \label{fig:3d_slope_strain}
\end{figure}

\begin{figure}[htbp]
    \centering
    \subfloat[$t=1.5$ s]{\includegraphics[width=0.9\textwidth]{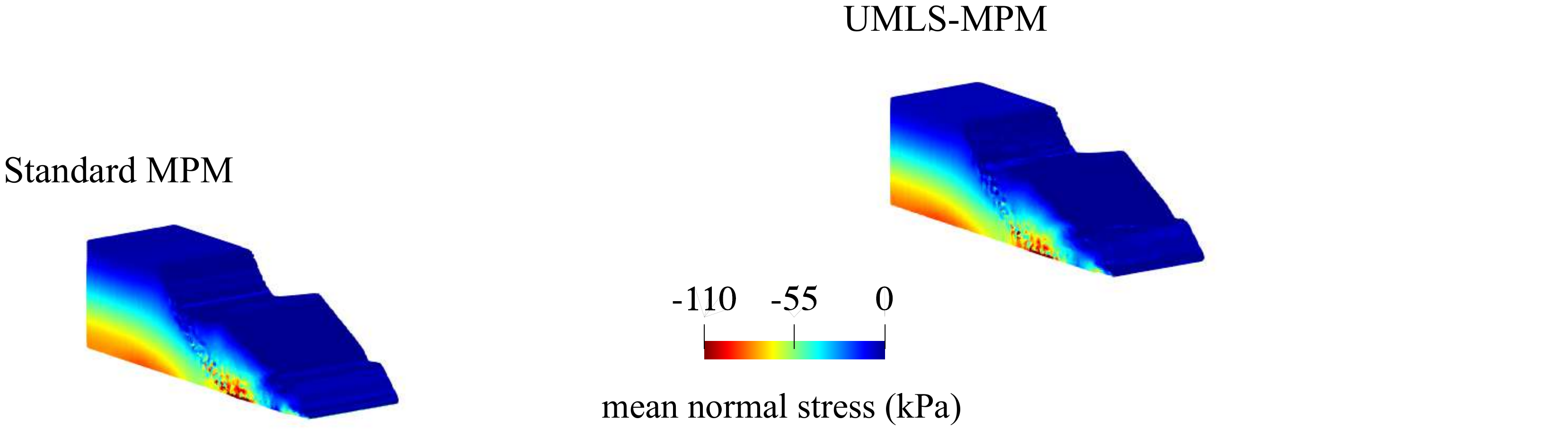}} \\
    \subfloat[$t=2.5$ s]{\includegraphics[width=0.9\textwidth]{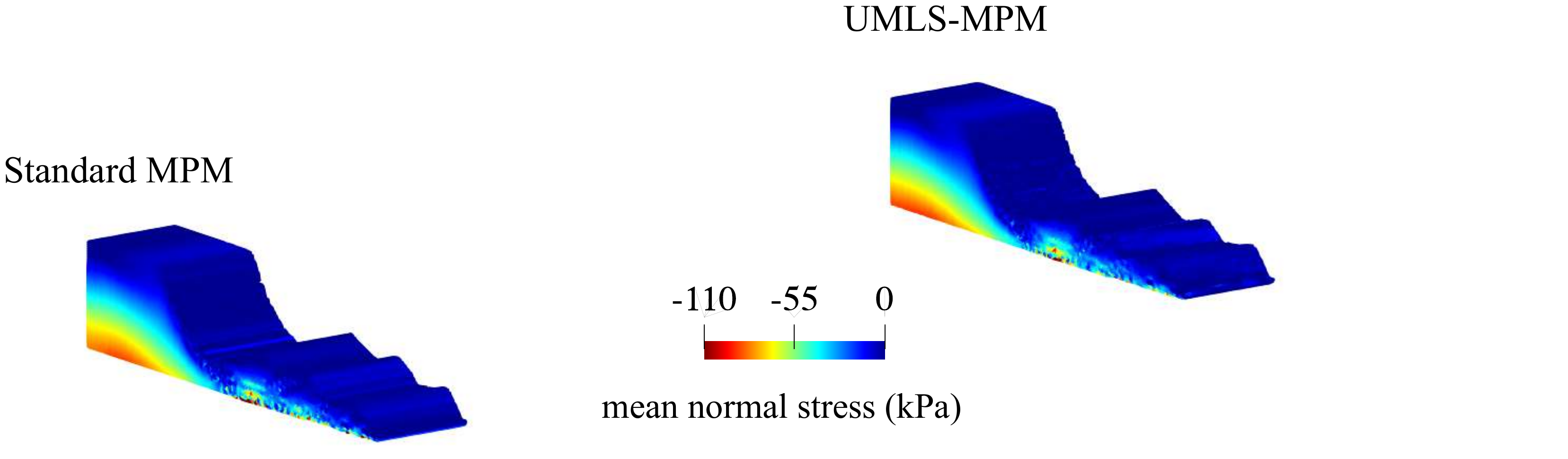}} \\
    \subfloat[$t=3.5$ s]{\includegraphics[width=0.9\textwidth]{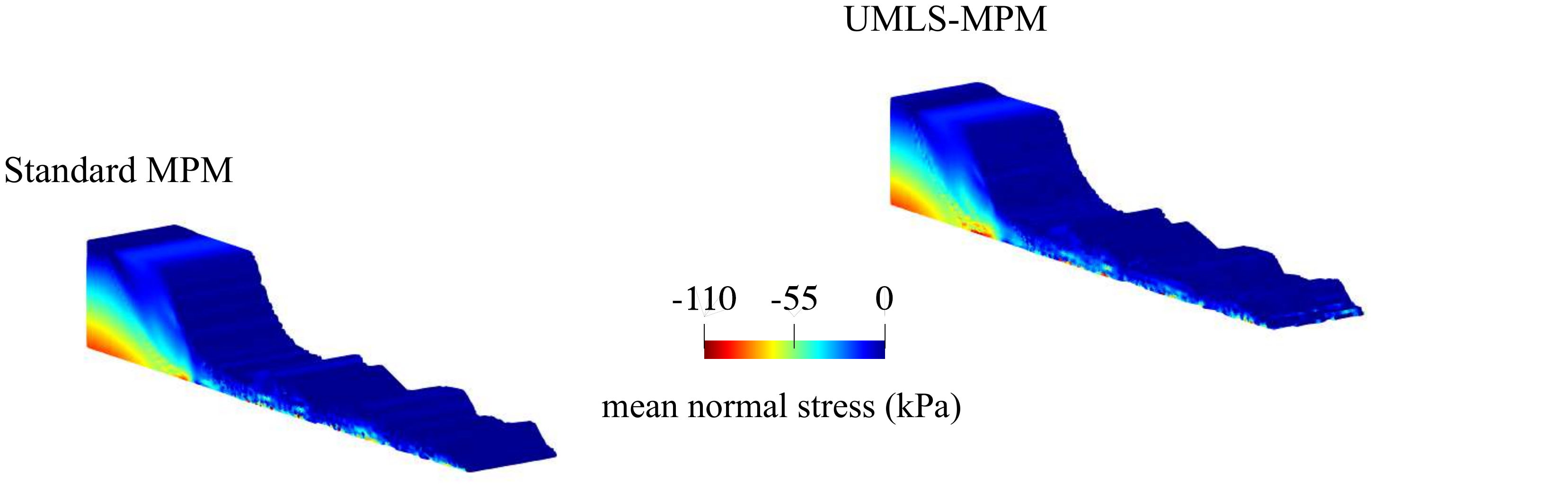}} \\
    \subfloat[$t=5.5$ s]{\includegraphics[width=0.9\textwidth]{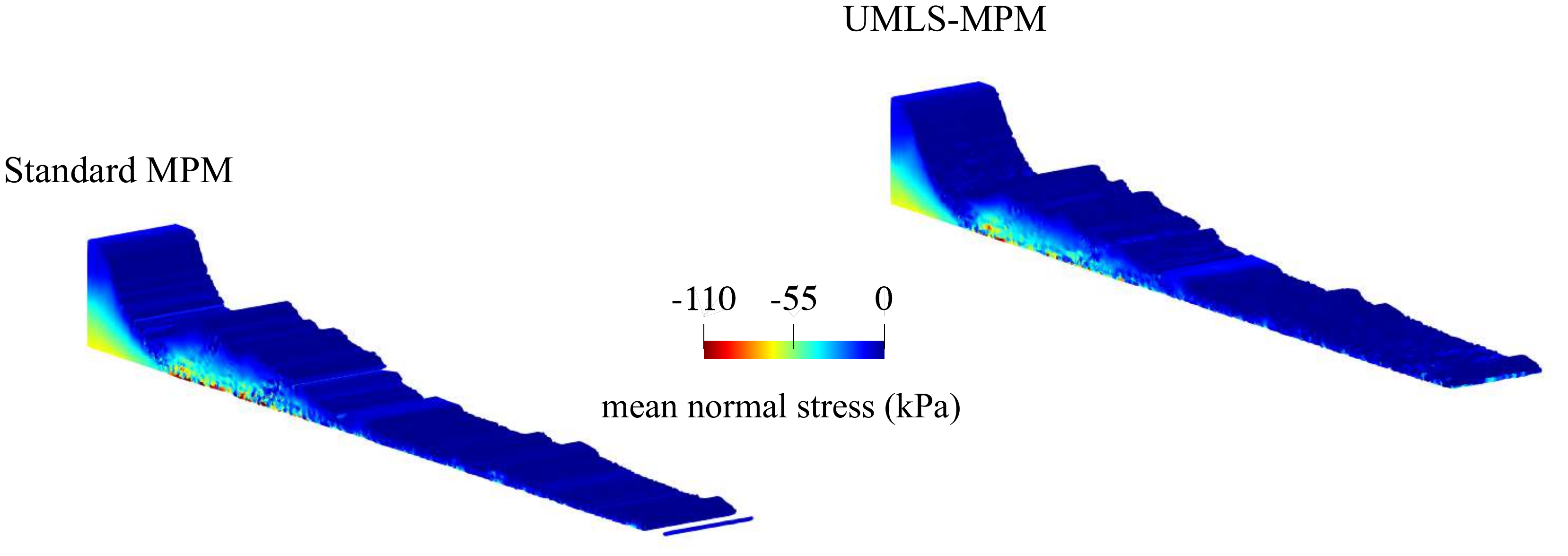}}
    \caption{\revision{Snapshots of the solutions from the \ours\ and the standard MPM with GIMP basis functions in Zhao~\etal~\cite{zhao2023circumventing}. Particles are colored based on the mean normal stress.}}
    \label{fig:3d_slope_stress}
\end{figure}

The space was discretized using Delaunay triangulation with the shortest edge length of 0.2\,m.
The material points were initialized with a spacing of 0.1\,m in each direction, amounting to 311,250 material points in the initial slope region.
Note that the spatial discretization aligns with the one used in \cite{zhao2023circumventing} in terms of both the shortest edge length of the background element and the number of material points.
Also, the $\bar{\bm{F}}$ approach proposed in \cite{zhao2023circumventing} was utilized to circumvent volumetric locking that \ours\ solutions encounter when simulating a large number of particles of incompressible materials.
As a reference to verify the correctness of the proposed formulation, the $\bar{\bm{F}}$ solution in \cite{zhao2023circumventing} was used.

\begin{figure}[htbp]
    \centering
    \includegraphics[width=0.6\textwidth]{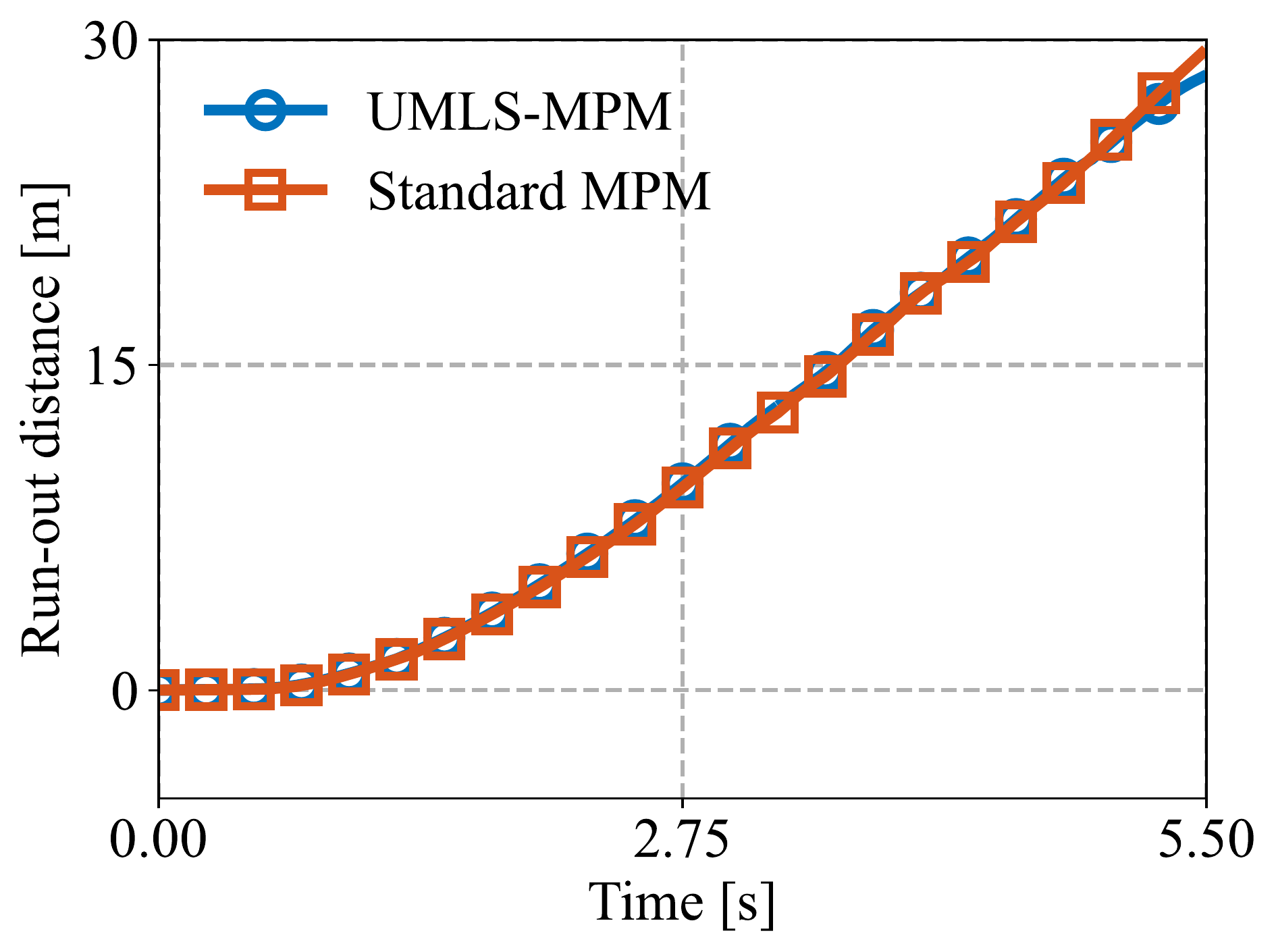}
    \caption{\revision{Time evolutions of the run-out distance from \ours\ and the standard MPM with GIMP basis functions.}}
    \label{fig:3d_slope_runout}
\end{figure}

Figures~\ref{fig:3d_slope_strain} and \ref{fig:3d_slope_stress} show the snapshots of the slope simulated by the standard and \ours, where particles are colored by the equivalent plastic strain and mean normal stress, respectively.
We can see that \ours\ effectively captures the retrogressive failure pattern of slopes made of sensitive clay.
Also, in terms of equivalent plastic strain fields and mean normal stress fields, we observe a strong similarity between the \ours\ solution and the reference solution from \cite{zhao2023circumventing}.

For a further quantitative comparison, Figure~\ref{fig:3d_slope_runout} presents the time evolutions of the run-out distance---a measure of the farthest movement of the sliding mass.
Observe that the distances in the standard and \ours\ solutions are remarkably similar.
Taken together, these findings confirm that the proposed method performs similarly to the standard MPM.

\subsection{3D Elastic Object Expansion in a Spherical Container}

\begin{figure}[htbp]
    \centering
    \includegraphics[width=0.6\textwidth]{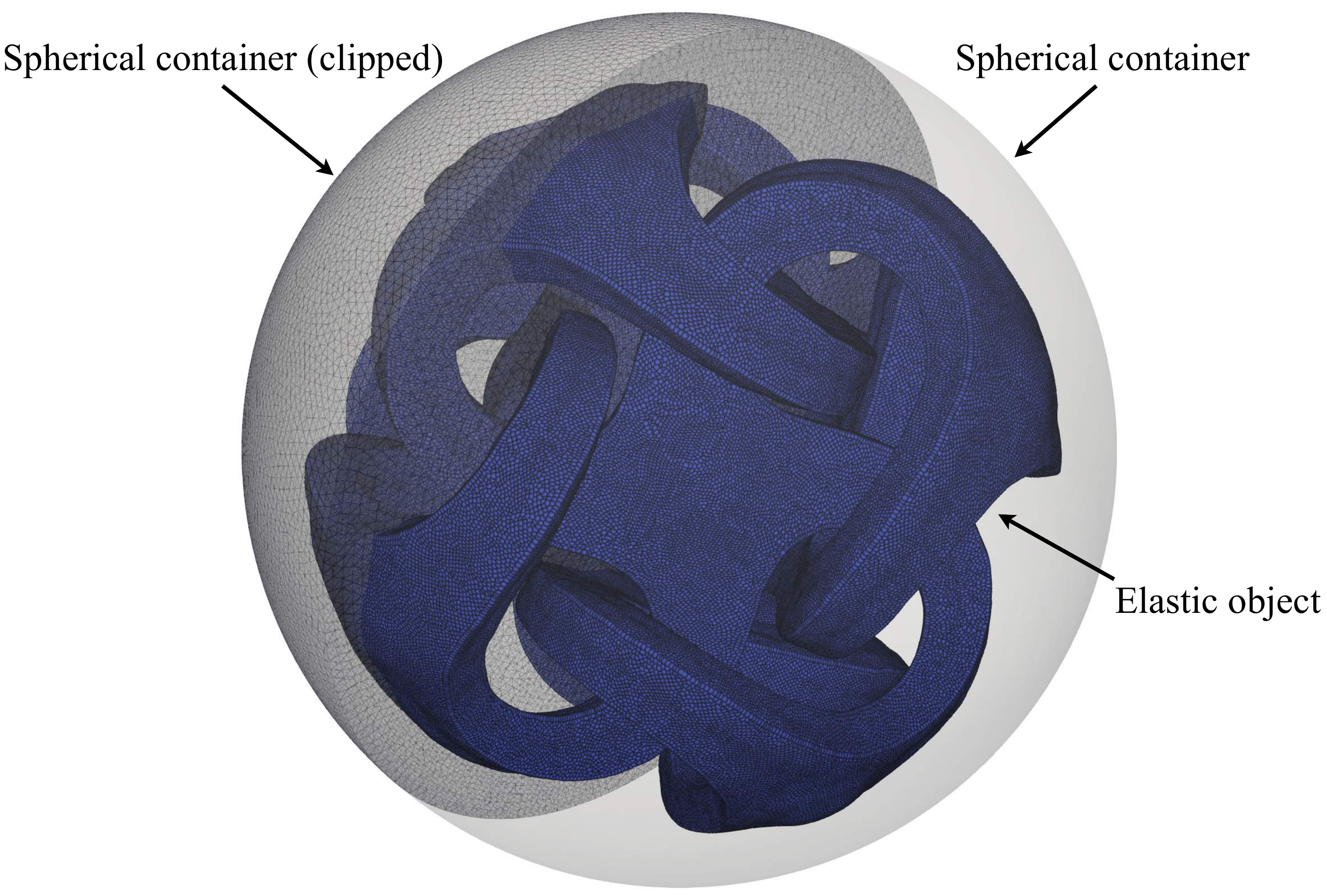}
    \caption{Problem setting of the 3D elastic object expansion in a spherical container.}
    \label{fig:collision_setup}
\end{figure}

\begin{figure}[htbp]
    \centering
    {\includegraphics[width=0.8\textwidth]{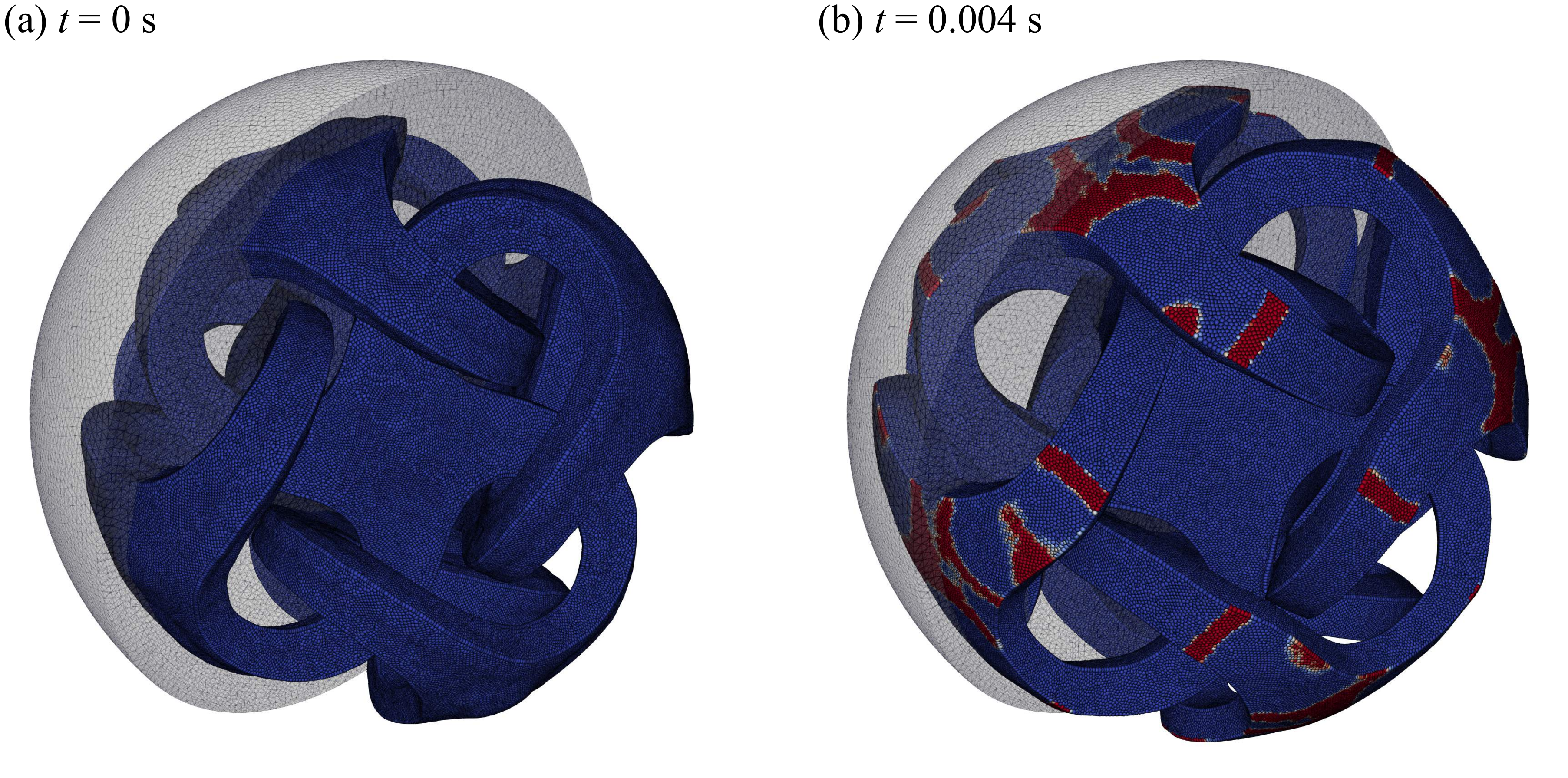}} \\
    {\includegraphics[width=0.8\textwidth]{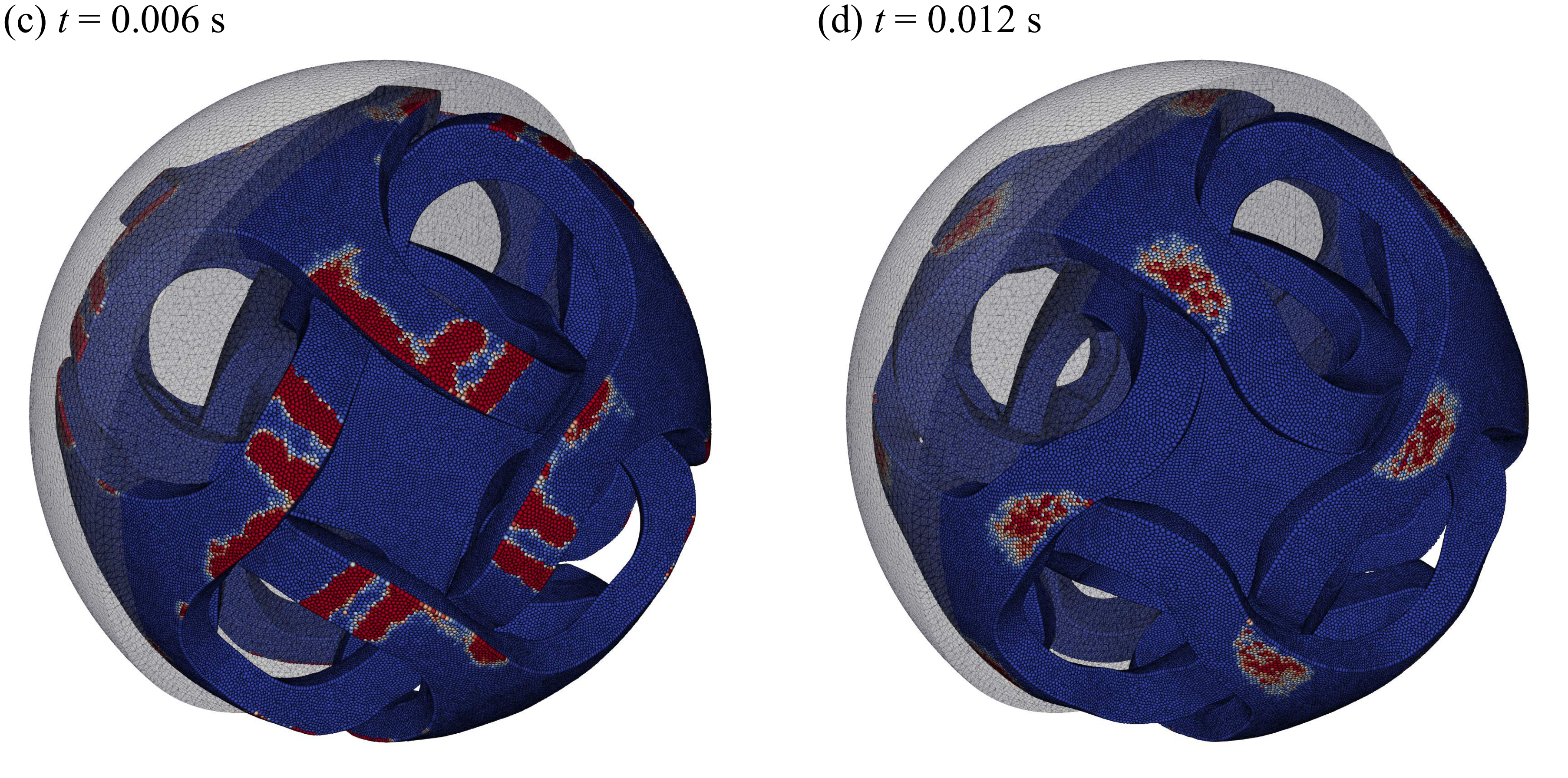}} \\
    {\includegraphics[width=0.8\textwidth]{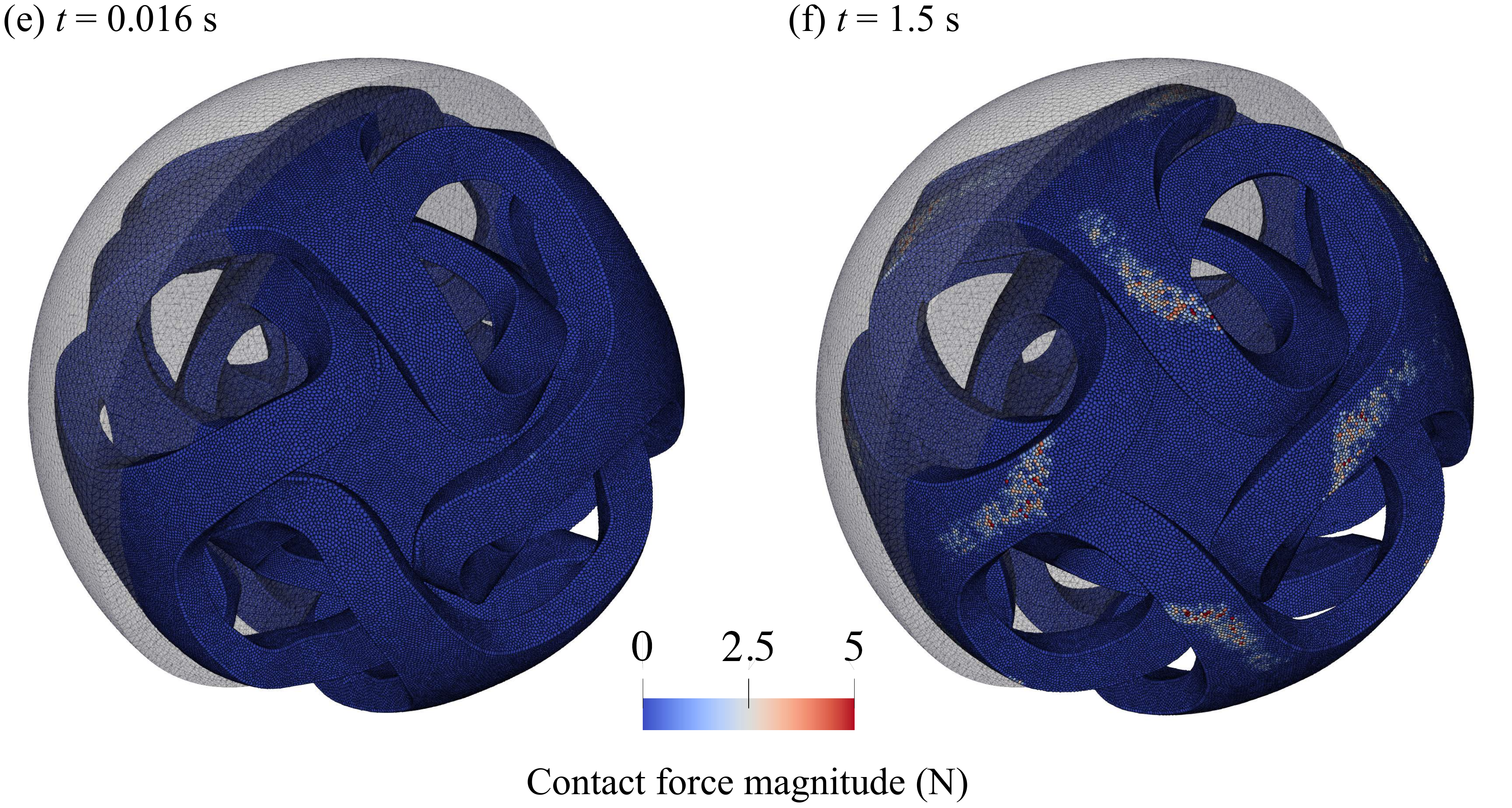}} \\
    \caption{Snapshots from the \ours\ solutions in which particles are colored based on the magnitude of the contact force.}
    \label{fig:collision_snapshots}
\end{figure}

Finally, we examined the performance of \ours\ in problems involving complex boundary geometry.
In this problem, the standard MPM with a structured grid may be challenged to impose conforming boundary conditions.
Hence, a collision between an elastic body with a spherical container was considered and simulated.

The geometry of the problem, as demonstrated in Figure~\ref{fig:collision_setup}, involves an elastic object in the shape of a Metatron, which is located at the center of a spherical container (with a radius of 0.5\,m).
The object is initially compressed isotropically (with an initial deformation gradient of $\tensor{F} = 0.75 \tensor{I}$), storing non-zero elastic potential energy.
At the onset of the simulation, the stored elastic energy is released, causing the object to expand and collide with the spherical container's boundary.
To capture the elastic behavior of the object, a Neohooken elasticity was adopted with a Young's modulus of 3.3\,MPa and a Poisson's ratio of $\nu = 0.49$.
The elastic object was discretized using a significant number of material points (2,392,177) for high-fidelity simulation.
Also, the spherical container was discretized using 2,178,129 tetrahedral elements, each with an average edge length of $h= 0.025$\,m.
Note that to avoid negative kernel values at boundary nodes, an extra layer of elements was added outside the original boundary, as discussed in Section~\ref{sec:verify:continous:kernel}.

To consider the frictional collision between the elastic object and the container boundary, a barrier approach~\cite{zhao2023coupled} was adopted, ensuring that the elastic object does not penetrate the boundary.
Contact forces are applied when the distance between a material point and the boundary is below a specific value $\hat{d}$, which was chosen to be a quarter of $h$ for sufficient accuracy.
Also, a friction coefficient of $\mu = 0.5$ was introduced to stop the sliding of the elastic object in the later stages.
The simulation ran with a time increment of $\Delta t = 6.16 \times 10^{-5}$\,s until $t = 1.5$\,s.

Figure~\ref{fig:collision_snapshots} presents six snapshots simulated with \ours, where the particles are colored based on the magnitude of the contact force.
The dynamic behavior of the object at various stages is well captured, including the initial expansion stage (a), the first collision stage (b--d), the rebounding stage (e), and the final static stage (f).
Overall, the \ours\ effectively handles complex geometry with a conformal discretization, which is critical for simulating a wide range of interactions between deformable objects and complex boundaries.

\section{Conclusion, Limitations, and Future Work}
\label{sec:conclusion}
This study has extended the Moving Least Square Material Point Method to encompass general tessellations within both 2D and 3D meshes. This advance has been achieved through the multiplication of a diminishing function to the MLS sample weights. Analytically proved, the approach ensures continuous kernel reconstruction and provides a sound foundation for MPM on any unstructured mesh types. Several numerical experiments in both 2D and 3D domains have demonstrated the method's effectiveness in achieving high-order convergence and eliminating cell-crossing errors.
However, the proposed method still has limitations.

To ensure the MLS solution does not degenerate, there are quality requirements for the surrounding nodes of the particle, which pose challenges, especially in meshes with sharp resolution transitions or poor quality. Additionally, related to varying mesh resolution is the sample weights: we used a B-spline function with a fixed support radius equaling the maximum edge size to ensure that the first ring of neighbors is reserved in the coarsest area. However, this strategy leads to nearly uniform sample weights in finer regions, blurring the kernel. A potential solution could be to incorporate a sizing field within the mesh to dynamically adjust the sample weight function.

\revision{
Although \ours~has advantages in conforming to irregular geometry, this also brings potential issues and leaves room for further improvements.
}
For example, the lack of surrounding samples on one side when the particle is inside the boundary cell can lead to kernel degeneration. While this issue was mitigated in our experiments by drawing an extra layer of cells, an automatic and algorithmic approach is more appealing.
\revision{
Combining \ours~with regular MPM is also promising, as it leverages the advantage of \ours~in conforming to boundary shapes, as well as the robustness and efficiency of regular MPM for interior domains. For the transition between interior nodes and boundary nodes, methods similar to immersed FEM~\cite{zhang2004immersed,liu2006immersed,li2022immersed} could be effective, as they project information from regular background grids to irregular meshes.
}

\revision{
From a practical standpoint, we need not a higher-order polynomial basis or higher rings of neighbors as long as the current schemes, which have the lowest cost as discussed in~\Secref{sec:method}, provides sufficient accuracy. However, from a theoretical perspective, these two prososals are still interesting. Several challenges can be noted:
1. Changing to higher rings of neighbors will require a different solution for the diminishing function~\Eqref{eq:invD:diminish} to accommodate the new discrete change of active sets.
2. The higher-order polynomial basis will result in a different MLS solution compared to the linear solution~\eqref{eq:mls:linear:ours:recon}. Additionally, since the propositions in \Appref{appdx:conservation:proof} are based on the linear solution, it remains an open question whether the higher-order solution has a continuous kernel and gradient, even with a diminishing function.
3. The higher-order polynomial basis will necessitate a more delicate selection of quadrature point locations. Unlike higher-order FEM simulations, where the quadrature points are usually well-designed and fixed~\cite{hughes2012finite}, in MPM, these points may need to be adjusted with the particle locations, requiring extra design and attention.
}

From an application standpoint, exploring the integration between the Material Point Method (MPM) and gas or fluid simulations via the Finite Volume Method (FVM) presents significant potential \cite{cao2022efficient,baumgarten2021coupled}.
\revision{
For simulating granular materials in realistic and irregular container shapes, combining \ours~with the Discrete Element Method (DEM)~\cite{cook2002discrete} is also an interesting direction. Moreover, object contact detection and handling~\cite{li2023contact} is crucial to prevent artificial penetration and sticking commonly seen in MPM~\cite{jiang2016material}.
}
In the field of scientific machine learning, recent advances have demonstrated learning MPM or other mesh-based simulations using graph neural networks, accelerating the inferences \cite{pfaff2020learning,li2023mpmnet,cao2023efficient}.
Our kernel construction suggests a potentially novel learning paradigm for MPM on unstructured meshes, similar to embedding both kernel and mesh information into the network's channel \cite{gao2022finite,li2023finite}.


\newpage
\bibliographystyle{plainnat} 
\bibliography{references}

\newpage
\appendix
\revision{
\section{Pipeline for Rapid Cell Search}
\label{appdx:pipeline:hash:search}
Fast determination of which cell contains a specific particle is crucial for the efficiency of \ours. We propose a hash grid-based method to accelerate this process. The pipeline consists of two steps: (1) building the hash grid connectivity table, a dictionary mapping the hash grid as a key to all of its touching cell indices, as detailed in Algorithm~\ref{alg:build:hash2cell}, and (2) the online search for determining which cell contains a given particle, as described in Algorithm~\ref{alg:find:containing:cell}.
}

\begin{algorithm}[t]
    \caption{Build Hash to Adjacent Cell}
    \begin{algorithmic}[1]
        \State \textbf{Input:} Mesh node positions: \texttt{pos}, mesh cells: \texttt{cell}, hash grid size: \texttt{dx}
        \State \textbf{Output:} Hash to adjacent cell connectivity: \texttt{hash2cell}
        
        \State \textbf{Calculate the bounding box of the whole mesh}
        \Statex \hspace{1cm} $\texttt{min\_pos} \gets \texttt{min(pos, axis=0)}$
        \Statex \hspace{1cm} $\texttt{max\_pos} \gets \texttt{max(pos, axis=0)}$
               
        \State \textbf{Initialize \texttt{hash2cell} as an empty dictionary}
        \Statex \hspace{1cm} $\texttt{hash2cell} \gets \{\}$
        
        \State \textbf{For each cell, find its bounding box, determine the spatial hash grids it touches, and append the cell to those hash grids}
        \For{\texttt{cell\_idx} in \texttt{range(len(cell))}}
            \State \hspace{1cm} $\texttt{cell\_min} \gets \texttt{min(pos[cell[cell\_idx]], axis=0)}$
            \State \hspace{1cm} $\texttt{cell\_max} \gets \texttt{max(pos[cell[cell\_idx]], axis=0)}$
            \State \hspace{1cm} $\texttt{min\_idx} \gets \texttt{floor((cell\_min - min\_pos) / dx)}$
            \State \hspace{1cm} $\texttt{max\_idx} \gets \texttt{ceil((cell\_max - min\_pos) / dx)}$
            
            \State \hspace{1cm} $\texttt{range} \gets \texttt{indices(max\_idx - min\_idx)}$
            \State \hspace{1cm} $\texttt{candidates} \gets \texttt{range.reshape(dim, -1).T + min\_idx}$
            \For{\texttt{c} in \texttt{candidates}}
                \State \hspace{2cm} $\texttt{hash2cell[c].append(cell\_idx)}$
            \EndFor
        \EndFor
    \end{algorithmic}
    \label{alg:build:hash2cell}
\end{algorithm}

\begin{algorithm}[t]
    \caption{Find Containing Cell}
    \begin{algorithmic}[1]
        \State \textbf{Input:} Mesh node positions: \texttt{pos}, mesh cells: \texttt{cell}, Mesh minimum position: $\texttt{min\_pos}$, hash grid size: \texttt{dx}, Hash to adjacent cell connectivity: \texttt{hash2cell}, particle position: \texttt{xp}
        \State \textbf{Output:} Containing cell index or \texttt{None}

        \State \textbf{Get the hash spatial index for the particle}
        \Statex \hspace{1cm} \texttt{hash\_idx} $\gets$ \texttt{floor((xp-min\_pos) / dx)}
        
        \State \textbf{Get all adjacent cells to this hash grid}
        \Statex \hspace{1cm} \texttt{candidates} $\gets$ \texttt{hash2cell[hash\_idx]}
        
        \State \textbf{For each candidate cell, check if the particle is inside}
        \For{\texttt{c} in \texttt{candidates}}
            \State \texttt{e} $\gets$ \texttt{cell[c]}
            \State \texttt{vs} $\gets$ \texttt{pos[e]}
            \State \# get the bounding box of the cell
            \State \texttt{e\_min} $\gets$ \texttt{min(vs, axis=0)}
            \State \texttt{e\_max} $\gets$ \texttt{max(vs, axis=0)}
            \State \# quick filter using bounding box
            \If{\texttt{all(xp >= e\_min) and all(xp <= e\_max)}}
                \State \# check barycentric coordinates
                \State \texttt{bc} $\gets$ \texttt{barycentric\_coord(vs, xp)}
                \If{\texttt{all(bc >= 0)}}
                    \State \textbf{return} \texttt{c}
                \EndIf
            \EndIf
        \EndFor
        \State \# If no candidate cells contain \texttt{xp}
        \State \textbf{return} \texttt{None}
    \end{algorithmic}
    \label{alg:find:containing:cell}
\end{algorithm}

\section{Proofs for the Continuous Reconstructions}
\label{appdx:conservation:proof}

For conciseness, we drop the subscripts ${p}$ in the following proofs.
We start by assuming there exists a smooth, locally diminishing function $\eta$ for the nodes added or removed from the set of nearby nodes $\gN^1$ when a particle crosses the boundary of a cell. Under this assumption, we can prove that our kernel value and gradient estimation is continuous across the boundary. We present the proof in 2D when a particle crosses an edge; the extension to 3D and other crossing cases is straightforward. Finally, we prove that \Eqref{eq:invD:diminish} satisfies the forementioned assumption.

\begin{prop}
    Our kernel value and gradient estimation is continuous across cell boundaries.
\end{prop}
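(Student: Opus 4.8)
The plan is to reduce the entire claim to the continuity of the single matrix expression $\mM_p^{-1}\mB_p\vu$. Equation~\eqref{eq:mls:linear:recon} shows that this one quantity simultaneously produces the reconstructed value $\hat{u}_p$ and its gradient $\nabla_{\vz}\hat{u}_p$, so it suffices to show that, as the particle crosses a shared cell face, the one-sided limits of $\mM_p$ and of $\mB_p\vu$ agree and that the limiting moment matrix remains invertible. Continuity of matrix inversion on the open set of invertible matrices then yields continuity of the whole reconstruction — value and gradient at once — without having to treat them separately.

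First I would fix a particle path crossing a single interface (an edge in 2D; the 3D and multiple-face cases are handled identically) and denote by $\gN^1_-$ and $\gN^1_+$ the active node sets on the two sides. I would partition $\gN^1_-\cup\gN^1_+$ into the \emph{common} nodes (present on both sides), the \emph{removed} nodes (in $\gN^1_-$ only), and the \emph{added} nodes (in $\gN^1_+$ only). Writing the assembled quantities with the modified weights $d'_{p,i}=\eta_{p,i}\,d_{p,i}$,
\begin{equation}
    \mM_p=\sum_{i} d'_{p,i}\,\vp(\vx_i-\vx_p)\vp^T(\vx_i-\vx_p),\qquad
    \mB_p\vu=\sum_{i} d'_{p,i}\,u_i\,\vp(\vx_i-\vx_p),
\end{equation}
where on each side the index $i$ ranges over the corresponding active set, I would then analyze the three groups one at a time.

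For the added and removed nodes I would invoke the standing assumption that $\eta_{p,i}\to 0$ for any node entering or leaving $\gN^1$ at the interface; since $d_{p,i}$ and the basis entries $\vp(\vx_i-\vx_p)$ stay bounded, their summands vanish in both one-sided limits, so these nodes contribute nothing at the face. For the common nodes I would argue that each summand is continuous across the interface: $d_{p,i}$ is a fixed smooth weight of $\vx_p$ and $\vp(\vx_i-\vx_p)$ is continuous in $\vx_p$, while $\eta_{p,i}$ matches from both sides because the barycentric coordinates of a face point with respect to the two shared vertices coincide regardless of which incident cell supplies them (the coordinate of each off-face vertex being zero on the face, as in \Eqref{eq:invD:diminish}). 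Combining the three groups gives $\lim_{-}\mM_p=\lim_{+}\mM_p=:\mM^{\ast}$ and $\lim_{-}\mB_p\vu=\lim_{+}\mB_p\vu$.

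The final step is to certify that $\mM^{\ast}$ is invertible, so that $\mM_p^{-1}\mB_p\vu$ possesses a common two-sided limit. This is where I expect the real difficulty to lie: one must verify that the surviving (common, plus vanishing) node configuration still lets the weighted moment matrix reproduce the full linear basis, i.e.\ that the retained samples are affinely non-degenerate at the face — precisely the mesh-quality/non-degeneracy condition flagged in \Secref{sec:interpws}. Granting this, continuity of the inverse closes the argument. The principal obstacle is therefore not the vanishing of the added/removed terms, which is immediate from the assumption, but the careful bookkeeping that the common-node contributions agree from both sides together with the guarantee that no rank is lost in the limit.
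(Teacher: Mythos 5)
Your proposal is correct and follows essentially the same route as the paper's proof: the same decomposition of the active nodes into common, added, and removed sets, the vanishing of $\eta_{p,i}$ on the added/removed nodes, continuity of the common-node summands, and invertibility of the moment matrix $\mM$ under a mesh-quality (non-degeneracy) assumption. The only difference is presentational — the paper makes the argument quantitative with explicit $\gO(\epsilon)$ perturbation bounds and a Neumann-series estimate for $\delta \mM^{-1}$ in terms of $\sigma(\mM)_{\min}$, whereas you invoke continuity of matrix inversion qualitatively; your explicit check that the common-node $\eta_{p,i}$ agrees from both sides of the face (because the off-face barycentric coordinate vanishes there) is a worthwhile detail that the paper leaves implicit in its claim that the common-node term is ``continuous by construction.''
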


\begin{proof}
    Let $\gN_{o,n}^1$ be the sets of nearby nodes before/after the particle $p$ crosses the common edge between the old/new cells $\gN_{o,n}^0$. Here, the subscripts ${o,n}$ denote the old/new cell, respectively, and the superscripts ${0,1}$ indicate the ring-0/1 neighbors of the cell, respectively. Let $\vx^{o,n}$ be the position of particle $p$ before/after the crossing and $||\vx^{n}-\vx^{o}|| = \gO(\epsilon)$.
    Define the common node set $\gN_{c}^1 = \gN_{o}^1 \cap \gN_{n}^1$, the added node set $\gN_{a}^1 = \gN_{n}^1 \setminus \gN_{c}^1$, and the removed node set $\gN_{r}^1 = \gN_{o}^1 \setminus \gN_{c}^1$.
    Since $\eta$ is locally diminishing for $v \in \gN_{a,r}^1$, we have a positive value $K_1$ such that $\eta = \gO(K_1 \epsilon) = \gO(\epsilon)$.
    The pertubation for the assembled matrix $\mM$ before/after the particle $p$ crosses an edge is
    \begin{equation}
        \delta \mM = \sum_{v \in \gN_{c}^1} \delta (\eta d \vp \vp^T) + \sum_{v \in \gN_{a}^1} \eta d \vp \vp^T - \sum_{v \in \gN_{r}^1} \eta d \vp \vp^T,
    \end{equation}
    where the first term is continuous by construction since every factor is smooth; \ie, $ ||\delta (\eta d \vp \vp^T)|| = \gO(\epsilon)$. For the second and third terms, since $\eta = \gO(\epsilon)$, we have
    \begin{equation}
        \begin{aligned}
            ||\delta \mM|| & \leq \sum_{v \in \gN_{c}^1} ||\delta (\eta d \vp \vp^T)|| + \sum_{v \in \gN_{a}^1} ||\eta d \vp \vp^T|| + \sum_{v \in \gN_{r}^1} ||\eta d \vp \vp^T|| \\
                           & \leq \biggl[|\gN_{c}^1| + \left(|\gN_{a}^1| + |\gN_{r}^1|\right) \max_{v \in \gN^1_{a,r}} ||d \vp \vp^T|| \biggr] \gO(\epsilon)                                          \\
                           & = \gO(|\gN^1| h^2 \epsilon)                                                                                                                                                   \\
                           & = \gO(\epsilon).                                                                                                                                                             \\
        \end{aligned}
    \end{equation}
    Here, as long as the mesh has a reasonably good quality, $|\gN^1|$ is finite and small; \ie, there is a finite and small amount of ring-1 neighbors. Also, $h$, a constant, is the support radius of the kernel, outside of which the weight is zero. In all, both $|\gN^1|$ and $h$ can be omitted in the analysis.

    The perturbation of the inverse matrix is given by
    \begin{equation}
        \begin{aligned}
            ||\delta \mM^{-1}|| & = ||(\mM + \delta \mM)^{-1} - \mM^{-1}||                                         \\
                                & = ||\mM^{-1} - \mM^{-1} \delta \mM \mM^{-1} + \gO(||\delta \mM||^2) - \mM^{-1}|| \\
                                & = ||\mM^{-1} \delta \mM \mM^{-1} + \gO(\epsilon^2)||                             \\
                                & \leq ||\mM^{-1} \delta \mM \mM^{-1}|| + \gO(\epsilon^2)                          \\
                                & \leq ||\mM^{-1}||^2 \cdot ||\delta \mM|| + \gO(\epsilon^2)                       \\
                                & = \frac{||\delta \mM||}{\sigma(\mM)^2_{\min}} + \gO(\epsilon^2)                  \\
                                & = \gO\left(\frac{\epsilon}{\sigma(\mM)^2_{\min}} \right) + \gO(\epsilon^2)       \\
                                & = \gO\left(\frac{\epsilon}{\sigma(\mM)^2_{\min}} \right),
        \end{aligned}
    \end{equation}
    where $\sigma(\mM)_{\min}$ is the minimum singular value of $\mM$.

    Similarly, for the perturbation in the assembled vector $\mB \vu = \mP \mD \vu$ before/after the particle crossing is
    \begin{equation}
        \begin{aligned}
            ||\delta \left(\mP \mD \vu\right) || & = ||\sum_{v \in \gN_{c}^1} \delta (\eta d u \vp) + \sum_{v \in \gN_{a}^1} \eta d u \vp - \sum_{v \in \gN_{r}^1} \eta d u \vp||            \\
                                                   & \leq \sum_{v \in \gN_{c}^1} ||\delta (\eta d u \vp)|| + \sum_{v \in \gN_{a}^1} ||\eta d u \vp|| + \sum_{v \in \gN_{r}^1} ||\eta d u \vp|| \\
                                                   & \leq \left[|\gN_{c}^1| + \left(|\gN_{a}^1| + |\gN_{r}^1|\right) \max_{v \in \gN^1_{a,r}} ||d u \vp|| \right] \gO(\epsilon)                                  \\
                                                   & = \gO(|\gN^1| h \epsilon)                                                                                                                                         \\
                                                   & = \gO(\epsilon).
        \end{aligned}
    \end{equation}
    Furthermore, we can establish the following bound for the assembled vector $\mP \mD \vu$:
    \begin{equation}
        \begin{aligned}
            ||\mP \mD \vu|| & = ||\sum_{v \in \gN^1} \eta d u \vp||                  \\
                              & \leq |\gN^1| \cdot \max_{v \in \gN^1} ||\eta d u \vp|| \\
                              & = \gO(|\gN^1| h)                                               \\
                              & = \gO(1).                                                      \\
        \end{aligned}
    \end{equation}
    Finally, the perturbation for $\left[ \hat{u}, \nabla \hat{u}^T \right]^T$ from \Eqref{eq:mls:linear:ours:recon} is
    \begin{equation}
        \begin{aligned}
            \left[ \hat{u}, \nabla \hat{u}^T \right]^T 
            & = ||\delta (\mM^{-1} \mB \vu)||                                                                       \\
            & = ||\delta (\mM^{-1} \mP \mD \vu)||                                                                       \\
                                                           & = ||\delta \mM^{-1} \mP \mD \vu + \mM^{-1} \delta \left( \mP \mD \vu\right)||                           \\
                                                           & \leq ||\delta \mM^{-1} \mP \mD \vu|| + ||\mM^{-1} \delta \left( \mP \mD \vu\right)||                    \\
                                                           & \leq ||\delta \mM^{-1}|| \cdot ||\mP \mD \vu|| + ||\mM^{-1}|| \cdot ||\delta \left(\mP \mD \vu\right)|| \\
                                                           & = \gO\left(\left(\frac{1}{\sigma(\mM)^2_{\min}} + \frac{1}{\sigma(\mM)_{\min}}\right)\epsilon\right).       \\
        \end{aligned}
    \end{equation}

    In the incomplete singular value decomposition of $\mM$, the singular values will always be non-negative. And if the surrounding nodes are not degenerate, the minimum singular value $\sigma(\mM)_{\min}$ will always be positive and the condition number of $\mM$ is bounded. Therefore, as long as the mesh is of reasonably good quality, both the function value and gradient estimation is $\gC^0$ across the boundary.
\end{proof}

\begin{prop}
    The function $\eta_i$ in \Eqref{eq:invD:diminish} is locally diminishing for $\forall i \in \gN_{a,r}^1$.
\end{prop}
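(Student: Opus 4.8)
The plan is to exploit the barycentric interpretation of $\eta_{p,i}$ together with the fact that a barycentric coordinate vanishes (to first order) on the facet opposite its vertex. I would carry out the argument explicitly for a 2D edge crossing, since the simplex structure makes the 3D face crossing identical in form, and reduce the whole statement to the single collapse $\eta_{p,i} = B_{p,c}$ for every added or removed node.

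First I would fix the combinatorics of the crossing. Write the old cell as $\gN_o^0 = \{a,b,c\}$ and the new cell as $\gN_n^0 = \{a,b,d\}$, sharing the edge $ab$, so that $c$ is the \emph{lost} vertex and $d$ the \emph{gained} vertex. I would first verify that $c$ and $d$ themselves belong to the common set $\gN_c^1$ (each is adjacent to $a$ and $b$, which persist in the other cell), so that the genuinely added/removed nodes lie strictly in the $1$-ring, outside both $0$-rings. Then I would characterize $\gN_r^1$: a node $i \in \gN_r^1 = \gN_o^1 \setminus \gN_n^1$ cannot be adjacent to $a$ or $b$ (otherwise it would also sit in $\gN_n^1$), so since $i \in \gN_o^1$ it must be adjacent to $c$. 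This forces $\mA_{i,a} = \mA_{i,b} = 0$ and $\mA_{i,c} = 1$.

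The key step is then a direct evaluation. Substituting this adjacency pattern into \Eqref{eq:invD:diminish}, for $i \in \gN_r^1$ with the particle still inside the old cell,
\[
\eta_{p,i} = \sum_{n \in \gN_o^0} B_{p,n}\,\mA_{i,n} = B_{p,c}.
\]
Because $B_{p,c}$ is the barycentric coordinate of the vertex opposite the shared edge $ab$, it is an affine function of $\vx_p$ that vanishes identically on the line through $a$ and $b$; hence near the edge it equals a geometry-dependent constant times the perpendicular distance from $\vx_p$ to that edge. Since the crossing displacement satisfies $\|\vx^{n}-\vx^{o}\| = \gO(\epsilon)$, that distance is $\gO(\epsilon)$, so $\eta_{p,i} = \gO(\epsilon)$, i.e.\ $\eta_{p,i}$ is locally diminishing. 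An identical computation for $i \in \gN_a^1$ (adjacent to $d$ but not to $a,b$) gives $\eta_{p,i} = B_{p,d}$ in the new cell, which again vanishes on $ab$.

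Finally I would record the generalization: in 3D the particle crosses a shared face $\{a,b,c\}$ with lost vertex $d$ and gained vertex $e$, and the same adjacency argument collapses $\eta_{p,i}$ to the single barycentric coordinate opposite the shared face, which vanishes linearly on it. I expect the main obstacle to be the careful combinatorial bookkeeping of the adjacency characterization---verifying that each added/removed node touches only the gained/lost vertex among the active $0$-ring, and that the gained/lost vertices are themselves never added or removed---rather than any analytic difficulty, since once $\eta_{p,i} = B_{p,c}$ is established the estimate is immediate from the linearity of barycentric coordinates. A secondary point worth stating explicitly is that the constant $K_1$ relating $\eta$ to $\epsilon$ is the gradient magnitude of the barycentric coordinate, which remains bounded under the same mesh-quality assumption invoked in the preceding proposition.
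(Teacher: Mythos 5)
Your proposal is correct and follows essentially the same route as the paper's proof: both reduce $\eta_{p,i}$ for an added/removed node to the single barycentric coordinate of the vertex opposite the shared facet (the paper's ``far-away node'' $i_{\text{far}}$ is your $c$ or $d$), and both then bound that coordinate by $\gO(\epsilon)$ via the perpendicular distance to the shared edge. The only difference is presentational: you spell out the adjacency bookkeeping ($\mA_{i,a}=\mA_{i,b}=0$, $\mA_{i,c}=1$) that the paper asserts more briefly.
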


\begin{proof}
    Formally, we need to prove that for any $i \in \gN_{a,r}^1$, when $\vx$ is crossing the edge of a triangle and $||\vx^{n}-\vx^{o}|| = \gO(\epsilon)$, the smoothing function $\eta_i = \gO(\epsilon)$.

    Denote the edge that the particle is crossing as $e$ and the portion of $||\vx^{n}-\vx^{o}||$ in the new/old cell as $L^{n,o}$. Trivially,
    \begin{equation}
        \begin{aligned}
            L^{n,o} & \leq L^{n} + L^{o}        \\
                    & = ||\vx^{n}-\vx^{o}|| \\
                    & = \gO(\epsilon).
        \end{aligned}
    \end{equation}
    Then, let the far-away node not on the edge but in the new/old cell be $i_{\text{far}}$ (\ie, $i_{\text{far}}^{n,o}\notin e \,\land\, i_{\text{far}}^{n,o} \in \gN_{o,n}^0$) and the height from a node $i$ to an edge $e$ be $H(i, e)$. Since the height is orthogonal to the edge, we have $H(\vx^{n,o}, e) \leq L^{n,o} = \gO(\epsilon)$. Consider the barycentric coordinate contributed by the far-away node, in the new/old cell respectively, for $\vx$:
    \begin{equation}
        \begin{aligned}
            B^{n,o}_{i_{\text{far}}} & = \frac{H(\vx^{n,o}, e) \cdot ||e||}{H(i_{\text{far}}^{n,o}, e) \cdot ||e||} \\
                              & = \frac{H(\vx^{n,o}, e)}{H(i_{\text{far}}^{n,o}, e)}                         \\
                              & = \gO(\frac{\epsilon}{H(i_{\text{far}}^{n,o}, e)})                             \\
                              & = \gO(\epsilon).
        \end{aligned}
        \label{eq:bvfarno}
    \end{equation}

\begin{figure}[tb]
    \centering
    \includegraphics[width=0.9\textwidth]{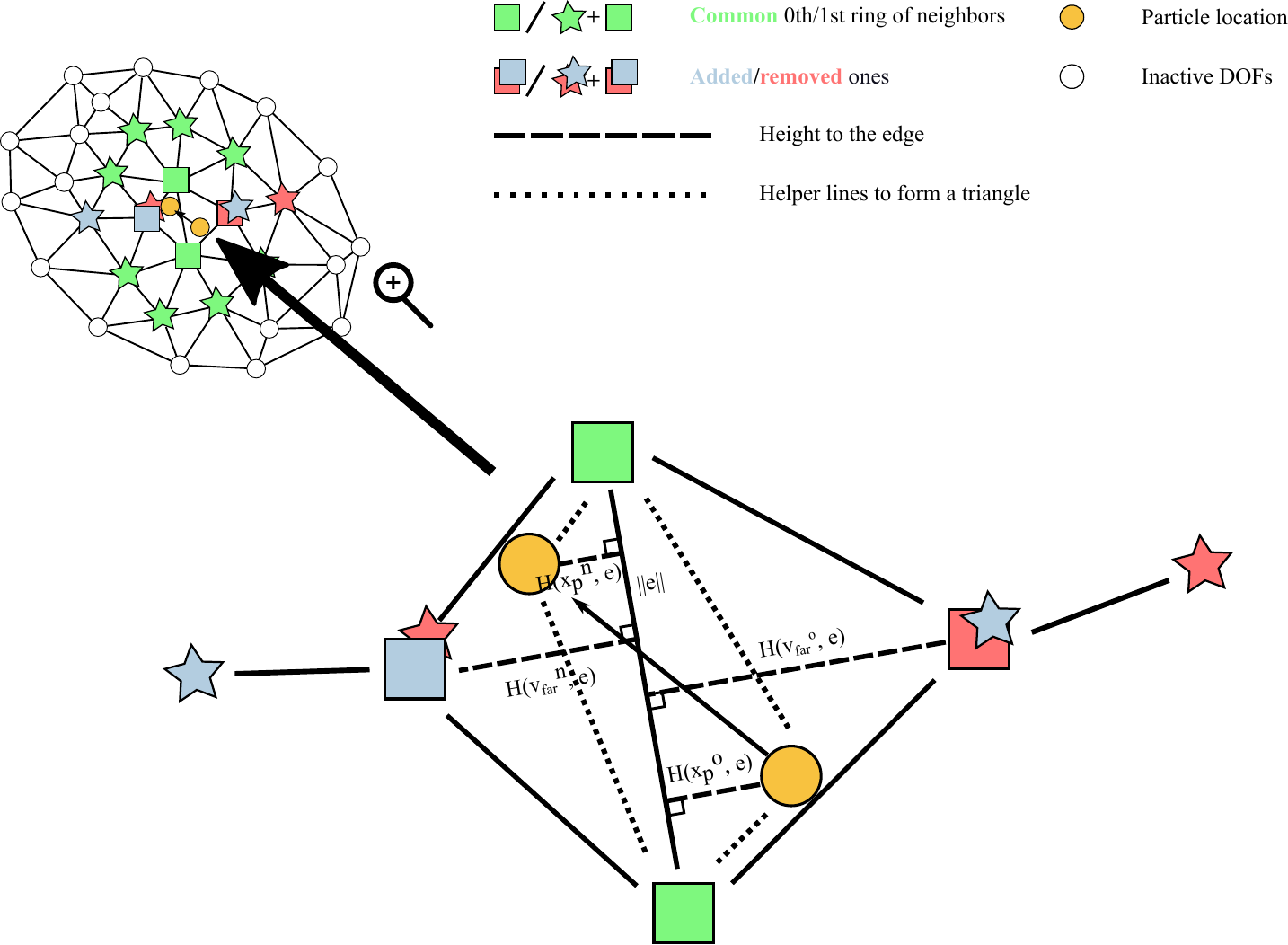}
        \caption{A visual representation of the notations used to prove that $\eta$ diminishes locally, as described in (\Eqref{eq:bvfarno}), for every vertex $i$ within the first ring of neighbors, $\gN^1_{a,r}$. The dashed line denotes the perpendicular height from a given position to the shared edge. Dotted lines are drawn to construct a triangle between the point $\vx$ and the edge, facilitating the computation of the barycentric coordinates.}
        \label{fig:invD:diminish}
\end{figure}S

    Finally, if a node is added/removed during the particle crossing (\ie, $i \in \gN_{a,r}^1$), this means that $i$ is only connected to the far-away nodes $i_{\text{far}}^{n,o}$ but not to the edge $e$; \ie, $\emA_{i,i_{\text{far}}^{n,o}} = 1, \forall i \in \gN^1_{a,r}$, otherwise $\emA_{i, n} = 0, \forall n \in e \land \forall i \in \gN^1_{a,r}$. Hence,
    \begin{equation}
        \begin{aligned}
            \eta & = \sum_{n \in \gN^0} B_n \emA_{i, n}                                        \\
                   & = B^{n,o}_{i_{\text{far}}} \emA_{i,i_{\text{far}}^{n,o}} + \sum_{n \in e} B_n \emA_{i, n} \\
                   & = B^{n,o}_{i_{\text{far}}} \cdot 1 + \sum_{n \in e} B_n \cdot 0                   \\
                   & = B^{n,o}_{i_{\text{far}}}                                                        \\
                   & = \gO(\epsilon), \quad \forall i \in \gN^1_{a,r}.
        \end{aligned}
    \end{equation}
    This concludes the proof.
\end{proof}

\section{Settings and Analytical Solutions for the Verification Experiments of the Continuous Reconstructions}
\label{appdx:kernel:verify:misc}

This section presents the detailed setup and analytical solutions for the 1D verification experiments on a uniform 1D mesh in \Secref{sec:verify:continous:kernel}. The kernel value is denoted as $f$ and the gradient estimation is denoted as $g$, respectively.
For the uniform 1D mesh, each cell has a length of $1$, and the unit support length for the B-spline used for the sample weights is also $1$. The analytical solution for the uniform 1D mesh, obtained using \texttt{Mathematica 2023}, is as follows:
\begin{equation}
    \begin{aligned}
        \begin{array}{cc}
            f= &
            \begin{cases}
                \frac{0.25 (0.5\, -x)^2 x}{x^5-6x^4+13.5 x^3-13.75 x^2+6.3125x-0.5625},                                               & 0.5<x\leq 1      \\
                \frac{-x^6+5x^5-9.5 x^4+8.5 x^3-3.3125 x^2+0.3125 x+0.0625}{-x^5+4x^4-5.5 x^3+3.25 x^2-1.3125x+1.0625},               & 1<x\leq1.5       \\
                \frac{x^6-12x^5+58.5 x^4-148.25 x^3+205.563 x^2-147.063 x+42.25}{x^5-11x^4+47.5 x^3-100.25x^2+103.313 x-41.125},      & 1.5<x\leq2       \\
                \frac{x^6-12x^5+58.5 x^4-147.75 x^3+202.563 x^2-141.438 x+39}{-x^5+9x^4-31.5 x^3+53.75x^2-45.3125 x+16.125},          & 2<x\leq2.5       \\
                \frac{x^6-19x^5+149.5 x^4-623.5 x^3+1453.31 x^2-1794.19 x+915.687}{-x^5+16x^4-101.5 x^3+318.75x^2-495.313 x+304.188}, & 2.5<x\leq 3      \\
                \frac{-0.25 x^3+2.75 x^2-10.0625 x+12.25}{-x^5+14x^4-77.5 x^3+212.25x^2-288.313 x+156.688},                           & 3<x\leq3.5       \\
                0,                                                                                                                    & \text{Otherwise}
            \end{cases}
            \\
        \end{array}
    \end{aligned}
\end{equation}

\begin{equation}
    \begin{aligned}
        \begin{array}{cc}
            g= &
            \begin{cases}
                -\frac{1(0.5\, -x)^2 x (x-2)}{x^5-6x^4+13.5 x^3-13.75x^2+6.3125 x-0.5625} ,                              & 0.5<x\leq 1      \\
                \frac{-x^5+5x^4-10.5 x^3+11.5 x^2-5.8125 x+1.0625}{-x^5+4x^4-5.5x^3+3.25 x^2-1.3125 x+1.0625},            & 1<x\leq1.5       \\
                \frac{-x^5+9x^4-33.5 x^3+65.75 x^2-67.3125 x+27.625}{-x^5+11x^4-47.5 x^3+100.25 x^2-103.313x+41.125} ,    & 1.5<x\leq2       \\
                \frac{-x^5+11x^4-49.5 x^3+112.25 x^2-125.313 x+53.625}{x^5-9x^4+31.5 x^3-53.75 x^2+45.3125x-16.125} ,     & 2<x\leq2.5       \\
                \frac{-x^5+15x^4-90.5 x^3+274.5 x^2-417.813 x+254.188}{x^5-16x^4+101.5 x^3-318.75 x^2+495.313x-304.188} , & 2.5<x\leq 3      \\
                \frac{x^4-13x^3+62.25 x^2-129.5 x+98}{-x^5+14x^4-77.5 x^3+212.25 x^2-288.313 x+156.688} ,                 & 3<x\leq3.5       \\
                0,                                                                                                        & \text{Otherwise}
            \end{cases}
            \\
        \end{array}
    \end{aligned}
\end{equation}

\section{Conservation of the Linear and Affine Momentum When Combined With Affine Particle-in-Cell}
\label{appdx:apic:conservation}

Since \ours\ by construction generates a kernel that is the partition of unity and conserves the linear basis \cite{levin1998approximation}, \ie,
\begin{align*}
    \sum_i w_{p, i}^{n}                     & = 1,       \\
    \sum_i w_{p, i}^{n} \vx_i^n             & = \vx_p^n, \\
    \sum_i w_{p, i}^{n} (\vx_i^n - \vx_p^n) & = 0,
\end{align*}
then the system's total linear and angular momentum will be conserved when combined with APIC. A simple introduction to APIC is given here for the sake of completeness, while the detailed proof can found in the supplementary document of the original APIC paper \cite{jiang2015affine}.

In APIC, mass $m_p$, position $\vx_p$, velocity $\vv_p$, and an affine matrix $\mB_p = \sum_i w_{p, i} v_i (\vx_i-\vx_p)^T$ are stored and tracked on particles. Then,
\begin{definition}
    The total linear momentum on grids is
    $$
        \vp^{tot}_i = \sum_i m_i \vv_i.
    $$
\end{definition}
\begin{definition}
    The total linear momentum on particles is
    $$
        \vp^{tot}_p = \sum_p m_p \vv_p.
    $$
\end{definition}
\begin{definition}
    The total angular momentum on grids is
    $$
        \mI^{tot}_{i} = \sum_i \vx_i \times m_i \vv_i.
    $$
\end{definition}
\begin{definition}
    The total angular momentum on particles is
    $$
        \mI^{tot}_{p} = \sum_p \vx_p \times m_p \vv_p + \sum_p m_p (\mB_p)^T : \mathbb{\epsilon},
    $$
\end{definition}
\noindent where $\mathbb{\epsilon}$ is the Levi-Civita permutation tensor, and for any matrix $\mA$, the contraction $\mA : \mathbb{\epsilon} = \sum_{\alpha\beta} A_{\alpha\beta} \mathbb{\epsilon}_{\alpha\beta\gamma}$, which is usually used to transition from a cross product into the tensor product $\vu \times \vv = (\vv \vu^T)^T : \mathbb{\epsilon}$.
Also note that for the total angular momentum of the particles: 1) the grid node locations can be perceived as the sample points of a rotating mass centered at the material particle location, and 2) the total angular momentum comprises both that of the center and that of the affine-rotation of the grids around the center.

APIC P2G is given by
\begin{align}
    \label{apicp2g}
    \begin{split}
        m_i^n &= \sum_p w_{p, i}^n m_p \\
        \mD_p^n &= \sum_i w_{p, i}^n (\vx_i^n-\vx_p^n) (\vx_i^n-\vx_p^n)^T\\
        m_i^n \vv_i^n &= \sum_p w_{p, i}^n m_p (\vv_p^n + \mB_p^n (\mD_p^n)^{-1} (\vx_i^n-\vx_p^n))
    \end{split}
\end{align}
with G2P given by
\begin{align}
    \label{apicg2p}
    \begin{split}
        \vv_p^{n+1} &= \sum_i w_{p, i}^n\tilde{v}_i^{n+1} \\
        \mB_p^{n+1} &= \sum_i w_{p, i}^n \tilde{v}_i^{n+1} (\vx_i^n-\vx_p^n)^T,
    \end{split}
\end{align}
where the superscript $\tilde{}$ means the intermediate value after the update on grids but before the G2P process.

\subsection{Numerical Validation}
\label{appdx:apic:conservation:experiment}

\begin{figure}[tb]
    \centering
    \includegraphics[width=0.9\textwidth]{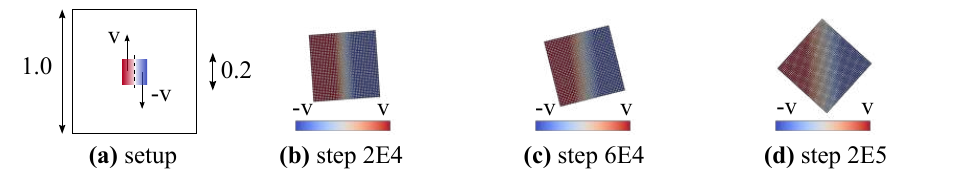}
    \caption{Setup and snapshots of a rotating elastic square.}
    \label{fig:2d_rotating_cube_setup}
\end{figure}

\begin{figure}[tb]
    \centering
    \includegraphics[width=0.9\textwidth]{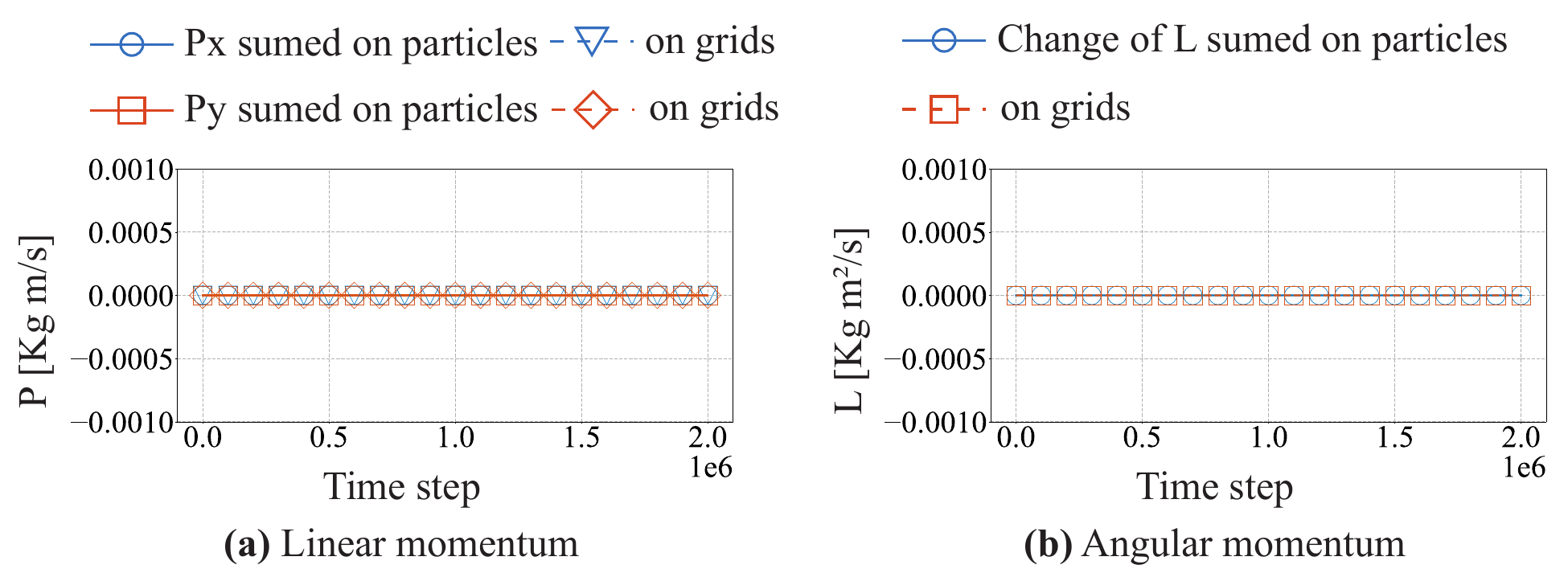}
    \caption{Logs of (a) linear and (b) angular momentum of the rotating cube experiment after $10^6$ time steps.}
    \label{fig:2d_rotating_cube_conservation}
\end{figure}

A numerical validation as in \cite{jiang2015affine} is also conducted here to verify these conservations. A square with a side length of $l=0.2$ is discretized with $20 \times 20$ particles. The physical properties of the square are as follows: $E=1\times10^4$\,Pa, $\nu = 0.3$, and $\rho=1.0$\,kg/m$^3$. Initially, the square is divided into two halves by a hypothetical vertical line through the middle. The left half is initialized with an upward velocity $\mathbf{v}= (1,0)$\,m/s, while the right half is initialized with a downward velocity $\mathbf{v}= (-1,0)$\,m/s.
The experimental setup is illustrated in \Figref{fig:2d_rotating_cube_setup}.
The background mesh is generated using Delaunay triangulation with a target element size of $0.01$\,m in a $1 \times 1$\,m$^2$ box. The simulation is run for $1\times10^6$ time steps with a time step size of $1\times10^{-5}$\,s.

The proposed conservation is accurately illustrated in Figure~\ref{fig:2d_rotating_cube_conservation}b--c, with only round-off errors on the order of $1 \times 10^{-15}$ and $1 \times 10^{-7}$ for the total linear and affine momentum of the system, respectively.

\end{document}